\newcommand{\codom}{\mathrm{codom}}
\newcommand{\dataspace}{\mathcal{Z}}
\newcommand{\btheta}{{\mathbb \theta}}
\newcommand{\calL}{\ensuremath{\mathcal L}}
\newcommand{\g}{\ensuremath{\mathbf g}}
\newcommand{\Id}{\ensuremath{\mathbf I}}
\newcommand{\TotalVar}{\mathrm{tv}}
\newcommand{\advantage}{\mathrm{Adv}}
\newcommand{\aiattributespace}{\mathcal{A}\xspace}
\newcommand{\train}{\ensuremath{\mathcal{T}}\xspace}
\newcommand{\trainset}{\ensuremath{D}\xspace}
\newcommand{\secretspace}{\ensuremath{\mathbb{S}}\xspace}
\newcommand{\observablespace}{\ensuremath{\mathbb{O}}\xspace}
\newcommand{\challengezero}{\ensuremath{z^*_0}\xspace}
\newcommand{\challengeone}{\ensuremath{z^*_1}\xspace}
\newcommand{\challengeb}{\ensuremath{z^*_b}\xspace}
\newcommand{\challenge}{\ensuremath{z^*}\xspace}
\newcommand{\adv}{\ensuremath{\mathrm{Attacker}}\xspace}
\newcommand{\N}[2]{\mathcal{N}\!\left(#1, #2\right)}
\newcommand{\ber}[1]{\ensuremath{\mathcal{B}}\left(#1\right)}
\newcommand{\concat}[2]{\ensuremath{#1 \mid #2}}
\newcommand{\prop}{\ensuremath{f}}
\newcommand{\samplerate}{\ensuremath{p}\xspace}
\newcommand{\tpr}{\ensuremath{\mathrm{TPR}}\xspace}
\newcommand{\fpr}{\ensuremath{\mathrm{FPR}}\xspace}
\newcommand{\tp}{\ensuremath{\mathrm{TP}}\xspace}
\newcommand{\fp}{\ensuremath{\mathrm{FP}}\xspace}
\newcommand{\fn}{\ensuremath{\mathrm{FN}}\xspace}
\newcommand{\positives}{\ensuremath{\mathrm{P}}\xspace}
\newcommand{\negatives}{\ensuremath{\mathrm{N}}\xspace}
\newcommand{\adult}{\texttt{Adult}\xspace}
\newcommand{\purchase}{\texttt{Purchase}\xspace}
\newcommand{\fnorm}[1]{\ensuremath{\|#1\|_{F}}\xspace}
\newcommand{\eg}{e.g.\xspace}
\newcommand{\ie}{i.e.\xspace}
\newtheorem{theorem}{Theorem}
\newtheorem{corollary}[theorem]{Corollary}
\newtheorem{proposition}[theorem]{Proposition}
\newenvironment{game}[1][htb]
  {
   \begin{algorithm}[#1]%
}{\end{algorithm}}
\DeclareMathOperator{\erf}{erf}
\definecolor{OliveGreen}{rgb}{0,0.6,0}
\newcommand{\newcode}[1]{\textcolor{OliveGreen}{#1}} %
\begin{document}

\date{}

\title{\Large \bf Closed-Form Bounds for DP-SGD against Record-level Inference}

\author{
{\rm Giovanni Cherubin\thanks{Corresponding author.}}\\
Microsoft Security Response Center\\
\and
{\rm Boris Köpf}\\
Microsoft Azure Research\\
\and
{\rm Andrew Paverd}\\
Microsoft Security Response Center\\
\and
{\rm Shruti Tople}\\
Microsoft Azure Research\\
\and
{\rm Lukas Wutschitz}\\
Microsoft M365 Research\\
\and
{\rm Santiago Zanella-Béguelin}\\
Microsoft Azure Research
}

\maketitle

\begin{abstract}
Machine learning models trained with differentially-private (DP) algorithms such as DP-SGD enjoy resilience against a wide range of privacy attacks.
Although it is possible to derive bounds for some attacks based solely on an $(\varepsilon,\delta)$-DP guarantee, meaningful bounds require a small enough privacy budget (\ie, injecting a large amount of noise), which results in a large loss in utility.
This paper presents a new approach to evaluate the privacy of machine learning models against specific record-level threats, such as membership and attribute inference, without the indirection through DP.
We focus on the popular DP-SGD algorithm, and derive simple closed-form bounds.
Our proofs model DP-SGD as an information theoretic channel whose inputs are the secrets that an attacker wants to infer (\eg, membership of a data record) and whose outputs are the intermediate model parameters produced by iterative optimization.
We obtain bounds for membership inference that match state-of-the-art techniques, whilst being orders of magnitude faster to compute.
Additionally, we present a novel data-dependent bound against attribute inference.
Our results provide a direct, interpretable, and practical way to evaluate the privacy of trained models against specific inference threats without sacrificing utility.
\end{abstract}

\section{Introduction}
\label{sec:introduction}

Privacy of training data is a central concern when deploying Machine Learning (ML) models.
Privacy risks encompass a variety of adversary goals with corresponding threat models.
For example, if one wanted to prevent an attacker with access to a model from inferring whether a specific data record was in the training data, we would aim to train the model to make it resilient against \emph{membership inference} attacks~\cite{Yeom:2020,Li:2013,Shokri:2017}.
On the other hand, if the concern is an attacker uncovering sensitive attributes about training data records, we would ensure resilience against \emph{attribute inference} attacks~\cite{Yeom:2020,Fredrikson:2015}.

\begin{figure}
	\centering
	\includegraphics[width=\linewidth]{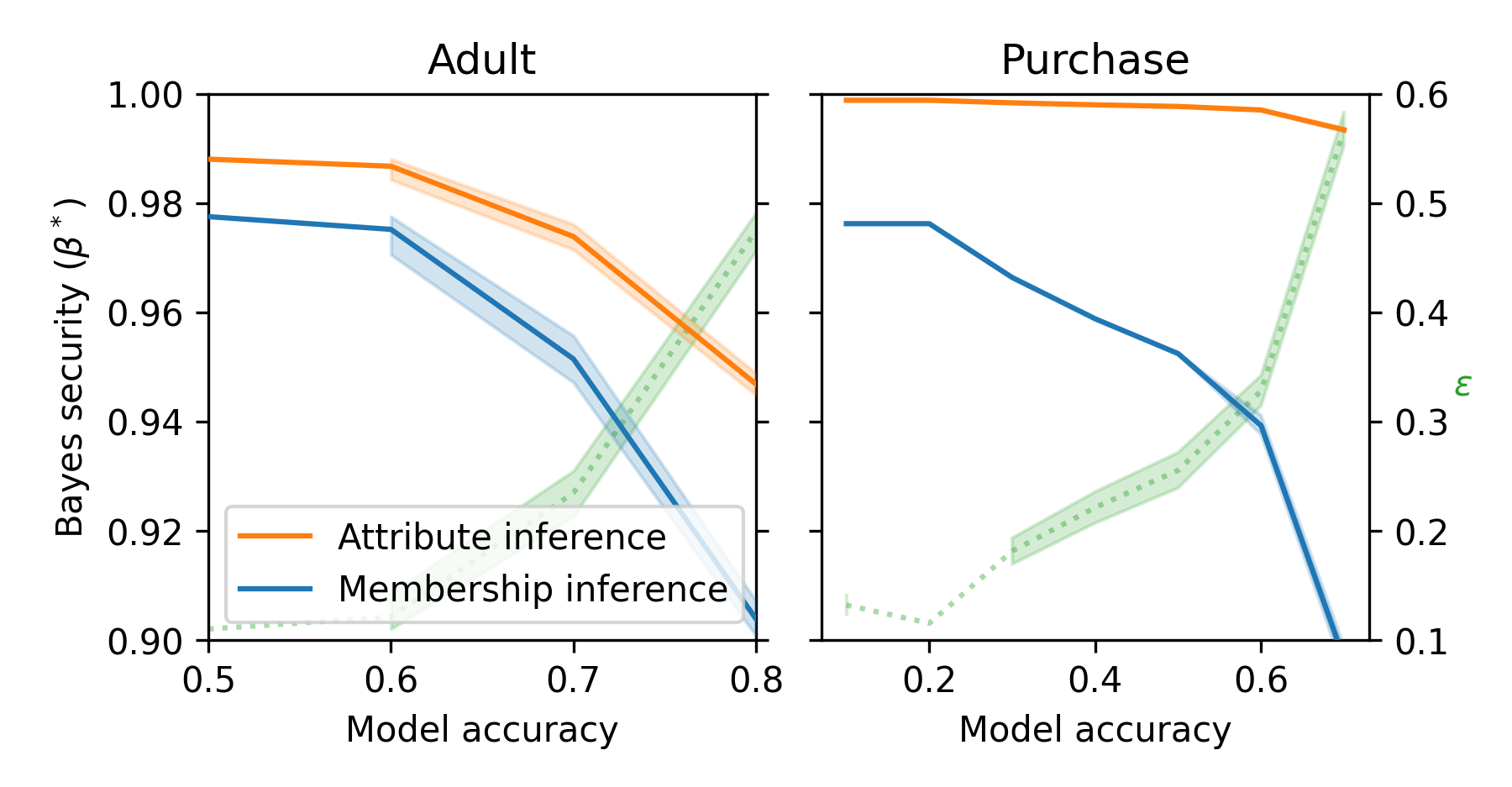}
	\caption{Bayes Security ($\beta^*$) of DP-SGD against MIA and AI on the \adult and \purchase datasets, w.r.t. the accuracy of the model; a higher $\beta^*$ means a more secure model.
	When possible, picking the weaker AI threat model enables achieving a better privacy-utility trade-off.
	For reference, we report the corresponding $(\varepsilon, \delta)$-DP (dashed green line), for $\delta=3.8\times 10^{-6}$ (\adult) and $\delta=4\times 10^{-7}$ (\purchase).
	}
	\label{fig:adult-purchase}
\end{figure}

In practice one may be mostly concerned about some \textit{specific} privacy risks, such as membership or attribute inference.
However, because practitioners lack tools to analyze and mitigate these specific risks, they resort to enforce Differential Privacy (DP), which regards \emph{any} leakage of information about individual records as a privacy violation.
From a theoretical perspective, this choice is convenient: with suitable parameters $(\varepsilon, \delta)$, DP provides quantifiable resilience against all threats to individual training data records.
There are numerous ways of numerically accounting for the privacy budget $(\varepsilon, \delta)$ spent when training a model~\cite{gopi2021numerical,ctd,fft1}, but few ways of computing bounds against specific privacy attacks~\cite{mahloujifar2022optimal}.

\paragraph{Threat-agnostic.}
Firstly, the definition of $(\varepsilon, \delta)$-DP is generally applied in a threat-agnostic manner.
However, in practice there are cases where specific threats give rise to privacy concerns while others do not.
For example, the fact that a person participated in the Census dataset is not privacy sensitive; but if an attacker were able to infer the values of sensitive attributes such as race or age, we would rightly regard this as a privacy violation.
Furthermore, there is no principled way to choose \emph{interpretable} values for $(\varepsilon, \delta)$ without considering a specific privacy threat.
Even when the threat is specified, we still need to find a relationship between this threat and $(\varepsilon, \delta)$ in order to evaluate the risk;
for example, prior work has explored the relationship between DP and membership inference~\cite{Humphries,ZWTSRPNKJ22,chatzikokolakis2020bayes}.
This raises the question: if our aim is to protect against specific threats, can we evaluate our models directly against these threats?

\paragraph{Implementation challenges.}
Secondly, it is known that implementations of $(\varepsilon, \delta)$-DP accountants can be error-prone.
This may be due to implementation difficulties~\cite{fft2,doroshenko2022connect} or numerical errors (\eg, floating point precision)~\cite{gopi2021numerical}.
Further, despite being considered optimal (up to discretization error), accountants generally come with computational costs, which researchers are currently trying to reduce~\cite{doroshenko2022connect}.

\paragraph{Our approach.}
In this paper, we show that it is possible to directly evaluate a trained model against specific privacy threats, such as membership and attribute inference, without actually performing these (often computationally expensive) attacks.
We focus on the mainstream training algorithm DP-SGD, and derive simple closed-form bounds against these threats.
At the core of our proof technique is the approximation of the distribution of intermediate gradients produced by DP-SGD with a Gaussian distribution.
We characterize the approximation error, and show that it can be made negligible by tuning the privacy parameters of the algorithm;
importantly, the error gets smaller for parameters that ensure good privacy.

Our theoretical analyses are facilitated by use of the \emph{Bayes security} metric ($\beta^*$)~\cite{chatzikokolakis2020bayes}.
The main benefit of this metric is its interpretability: it corresponds to the complement of the attacker's advantage, which is widely used in the privacy-preserving ML literature (\eg,~\cite{Yeom:2020}).
Furthermore, Bayes security is threat model specific, prior independent, and one can easily match it to the (optimal) attacker's accuracy for a specific prior.
Additionally, we prove that Bayes security bounds the true positive rate (TPR) of an attacker aiming for a certain false positive rate (\ie TPR$@$FPR), which captures particularly well the risk of membership inference~\cite{carlini2022membership}.

Overall, the simplicity of our proofs suggests our techniques can be extended to study other algorithms and privacy threats.
We summarize our contributions as follows:

\begin{itemize}

\item We propose a new approach to directly measure the privacy of ML models trained using DP-SGD, which addresses the drawbacks outlined above:
1) It is \emph{threat-specific}.
2) It streamlines the proof, in that the metric is directly computed without going via $(\varepsilon, \delta)$,
and it makes for a straightforward implementation. Importantly,
it is orders of magnitude faster to compute than state-of-the-art methods for measuring the risk against MIA~\cite{gopi2021numerical,ctd}.\

\item We demonstrate that our new approach matches (tight) existing techniques in computing bounds for membership inference (MIA)
while requiring orders of magnitude lesser computation time than prior work. %

\item We show a relationship between Bayes security and TPR$@$FPR, a standard metric for MIA~\cite{carlini2022membership}.

\item We use our new approach to compute bounds for attribute inference (AI).
From our bounds, we observe that DP-SGD is significantly more secure against AI than MIA.
This is important because, if a practical application requires security against AI but not MIA, one can achieve a better utility whilst maintaining acceptable privacy, as shown in \Cref{fig:adult-purchase}.

\end{itemize}

Our results, as well as those in the previous literature, assume that an \emph{attacker} has access to intermediate model weights during training.
However, a more realistic (\emph{inference-time}) attacker only has access to the final weights of the model.
To assist future research effort, we also report on our unsuccessful attempts towards obtaining tighter bounds for inference-time attackers.
We show how our framework can model this scenario, and discuss what problems one may need to solve in order to obtain such tighter bounds.

\section{Background and Preliminaries}
\label{sec:background}
We study the security of DP-SGD against record-level inference, with a focus on MIA and AI.
In this section, we provide an overview of the DP-SGD algorithm, we formally define MIA and AI, and we describe
the security metric we use to quantify the resilience of an ML model against both threats.

\subsection{DP-SGD}

Proposed by \citet{Abadi:2016}, Differentially Private Stochastic Gradient Descent (DP-SGD) is a modification of SGD to satisfy $(\varepsilon, \delta)$-DP, as shown in Algorithm~\ref{alg:dpsgd}.
Consider a training set of data records $\{z_1,\ldots,z_{N}\} \in \dataspace^N$ and a loss function $\calL(\btheta)=\frac{1}{N}\sum_i \calL(\btheta, z_i)$, based on model weights $\btheta$.
Let $\eta_t$ be a learning rate, $\sigma$ a noise scaling factor, $C$ a gradient clipping norm, and $L/N$ a sampling factor.
DP-SGD trains a model $\btheta_T$ as follows:
for each step $t = 1, ..., T$, sample on average $L$ records from the training set, clip their gradients' norms to $C$, and add Gaussian noise to their sum;
use the resulting \textit{noisy gradient} $\tilde{\g}_t$ to update the model weights according to the learning rate, and repeat for the desired number of steps.

\begin{algorithm}[htb!]
	\caption{$\textrm{DP-SGD}(\{z_1,\ldots,z_{N}\},\ \calL(\btheta),\ \eta_t,\ \sigma,\ L,\ C)$}
	\label{alg:dpsgd}
	\DontPrintSemicolon
	{\bf Initialize} $\btheta_0$ randomly\;
	\For{$t \in [T]$}{
		Take a random sample $L_t$ with sampling probability $L/N$\;
		{\bf Compute gradient}\;
		{For each $i\in L_t$, compute $\g_t(z_i) \gets \nabla_{\btheta_t} \calL(\btheta_t, z_i)$}\;
		{\bf Clip gradient}\;
		{$\bar{\g}_t(z_i) \gets \g_t(z_i) / \max\big(1, \frac{\|\g_t(z_i)\|_2}{C}\big)$}\;
		{\bf Add noise}\;
		{$\tilde{\g}_t \gets \frac{1}{L}\left( \sum_i \bar{\g}_t(z_i) + \mathcal{N}(0, \sigma^2 C^2 \Id)\right)$}\;
		{\bf Descent}\;
		{ $\btheta_{t+1} \gets \btheta_{t} - \eta_t \tilde{\g}_t$}\;
	}
	{\bf Output} $\btheta_T$ %
\end{algorithm}

Typically, the privacy parameters $(\varepsilon, \delta)$ are obtained numerically via \textit{accounting mechanisms};
due to the iterative nature of DP-SGD this is essential to obtain accurate privacy guarantees.
\citet{Abadi:2016} introduced the Moments Accountant for computing the privacy guarantees for composed mechanisms.
More recently, \citet{gdp} introduced $f$-DP which gives rise to lossless composition:
this notion of DP composes all possible $(\varepsilon, \delta)$ at once, and only afterwards it converts the privacy guarantee back to a single $(\varepsilon, \delta)$ pair.
Alas, computing this composition is challenging, and several numerical approximations have been developed \cite{gopi2021numerical,fft1, fft2, ctd}; these are tight up to discretization error.

\subsection{Threat Models}

We consider two specific threat models: membership inference (Game~\ref{game:mia-record-generalized}), and attribute inference (Game~\ref{game:ai}).

\paragraph*{Membership inference (MIA).}
In (record-level) MIA, the attacker aims to ascertain whether a data record appeared in the model's training set.
This threat model is formalized in Game~\ref{game:mia-record-generalized}.
In this game, a challenge point $\challenge_s$ is sampled from a set of challenge points $\{\challenge_i\}_{i=1}^M$ according to an arbitrary prior distribution $\pi$ on this set.
The model is trained on $\trainset \cup \{\challenge_s\}$ for $T$ steps, and the intermediate DP-SGD updates $\{\btheta_{t}\}_{t=1}^T$ are revealed to the attacker; the attacker is also assumed to know the game parameters $\train, T, \pi$, as well as the set of challenge points.
The attacker's goal is to guess \emph{which} of the challenge points was used for training the model.
Game~\ref{game:mia-record-generalized} generalizes common MIA setups in two ways.
First, the number of challenge points $M$ can be larger than $2$.
Second, the game enables associating a prior distribution $\pi$ to the choice of the challenge points.
Thanks to the metric we use (\Cref{sec:bayes-security}),
it will suffice to compute the security against MIA for the two worst-case challenge points ($M=2$) and a uniform prior (\Cref{sec:main-result}).

\begin{game}[hbt!]
	\DontPrintSemicolon
	\caption[F]{$\textrm{MIA-record-level}(\train_T, \trainset, \{\challenge_1, ..., \challenge_M\}, \pi_{\{1, ..., M\}})$}
		$s \gets \pi_{\{1, ..., M\}}$\;
		$\{\btheta_{t}\}_{t=1}^T \gets \train_T(\trainset \cup \{\challenge_s\})$\;
		$s' \gets \adv(\{\btheta_{t}\}_{t=1}^T,\ \{\challenge_i\}_{i=1}^{M}, \trainset, \train_T, \pi_{\{1, ..., M\}})$
	\label{game:mia-record-generalized}
\end{game}

\paragraph*{Attribute inference (AI).}
Let $\challenge = \concat{\varphi}{s}$ be a data record, composed of the concatenation of two vectors: $\varphi$ and $s$.
In attribute inference, the attacker aims to infer the value of one or more \emph{sensitive attributes} of a data record, $s$, given access to the remainder of that record, $\varphi$.
As shown in Game~\ref{game:ai}, the sensitive attribute $s \in \aiattributespace$ is sampled according to some prior
$\pi$ on the set.
The model is trained for $T$ steps on the training set $\trainset \cup \{\challenge\}$, and the intermediate updates $\{\btheta_{t}\}_{t=1}^T$ are revealed to the attacker; the attacker is also assumed to know all the game parameters, including the set of sensitive attributes.
The goal of the attacker is to guess the sensitive attribute $s$.

\begin{game}[hbt!]
	\DontPrintSemicolon
	\caption[F]{$\textrm{AI}(\train_T, \trainset, \varphi, \aiattributespace, \pi_{\aiattributespace})$}
		$s \gets \pi_{\aiattributespace}$\;
		$\challenge \gets \concat{\varphi}{s}$ \tcp*{The attributes are concatenated}
		$\{\btheta_{t}\}_{t=1}^T \gets \train_T(\trainset \cup \{\challenge\})$\;
		$s' \gets \adv(\{\btheta_{t}\}_{t=1}^T,\ \varphi, \aiattributespace, \trainset, \train_T, \pi_{\{1, ..., M\}})$
	\label{game:ai}
\end{game}

\subsection{The Bayes security metric}
\label{sec:bayes-security}

We define a metric of risk for these threats.

\paragraph*{Generalized attacker advantage.}
The commonly-used metric of \emph{advantage} quantifies how much more likely an attacker is to succeed, at either membership or attribute inference, when given access to the trained model, as compared with not having this access.
Formally, suppose the attacker's goal is to guess some secret information, measured by random variable $S$.
Let $\pi$ denote any prior knowledge the attacker has about $S$; mathematically, $\pi$ is a probability distribution on the range of $S$.
We write $\adv(\pi, \theta)$ to indicate an attacker who has access to the model $\theta$ (and with prior knowledge $\pi$), and $\adv(\pi)$ for an attacker who guesses purely based on prior knowledge;
we assume the former is at least as successful as the latter.
The \textit{generalized} advantage~\cite{cherubin2017bayes} is defined to be the difference between the probability of success of these two attackers, normalized to a value between $[0, 1]$: $0$ implies no advantage and $1$ maximal advantage:
\begin{align*}
	\advantage_\pi
	&= \frac{Pr[\adv(\pi, \theta) = S] - Pr[\adv(\pi) = S]}{1-Pr[\adv(\pi) = S]} \,.
\end{align*}

This is a generalized version of the notion of advantage typically used in the literature (\eg, \cite{Yeom:2020}):
by letting $S$ take a binary value and setting $\pi$ to be a uniform distribution over the possible values of $S$, we get $Pr[\adv(\pi) = S] = \nicefrac{1}{2}$;
substituted into the above, this gives the familiar expression for advantage, as used by \citet{Yeom:2020}:
$$\advantage_\pi = 2 Pr[\adv(\pi, \theta) = S] - 1$$

Note that this specific notion of advantage cannot be applied to Games~\ref{game:mia-record-generalized} and~\ref{game:ai} because, in both cases, the secret may have more than two possible values, and they need not come from a uniform prior distribution $\pi$.

\paragraph{Bayes Security.}
Based on the notion of generalized advantage, we use the \emph{Bayes security} metric~\cite{cherubin2017bayes,chatzikokolakis2020bayes}, defined as:
\begin{equation}
	\label{eq:beta}
	\beta^* = 1 - \max_\pi \advantage_\pi \\
\end{equation}
This metric takes values in the range $[0, 1]$, where $1$ indicates perfect security (\ie, no information leakage).
Importantly, the following holds:
\begin{theorem}[{Bayes security and advantage~\cite[Theorem 1]{chatzikokolakis2020bayes}}]
	\label{thm:mia-game-reduction}
	Bayes security is achieved on (equivalently, the generalized advantage is maximized on)
	a uniform prior on two secrets:
	$$\beta^* = 1 - \max_{s_0, s_1 \in \secretspace} \advantage_{u_{s_0, s_1}} \,,$$
	where $u_{s_0, s_1}$ is a uniform prior on two secrets $s_0, s_1 \in \secretspace$, \ie, $Pr[S=s_0] = Pr[S=s_1] = \nicefrac{1}{2}$,
	and $Pr[S=s] = 0$ $\forall s \neq s_0, s_1$.

\end{theorem}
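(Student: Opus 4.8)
The plan is to reduce the statement to a single clean inequality about total variation distance. Write the channel induced by the training algorithm as the family of conditional output distributions $\Pr[\cdot \mid s]$ over intermediate weights, one for each secret value $s \in \secretspace$. Since the attackers in the definition of $\advantage_\pi$ are the optimal (Bayes) ones, their success probabilities are the \emph{prior} and \emph{posterior Bayes vulnerabilities}
\begin{align*}
  \Pr[\adv(\pi)=S] &= \max_s \pi_s, \\
  \Pr[\adv(\pi,\theta)=S] &= \sum_\theta \max_s \pi_s \Pr[\theta\mid s],
\end{align*}
so that $\advantage_\pi = (\hat V_\pi - V_\pi)/(1-V_\pi)$ with $V_\pi = \max_s\pi_s$ and $\hat V_\pi = \sum_\theta\max_s\pi_s\Pr[\theta\mid s]$ (integrals replace sums in the continuous case). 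With this notation the whole theorem amounts to showing $\max_\pi\advantage_\pi = \max_{s_0,s_1\in\secretspace}\TotalVar(\Pr[\cdot\mid s_0],\Pr[\cdot\mid s_1])$.

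First I would dispatch the two-secret case. For a uniform prior $u_{s_0,s_1}$ one has $V_{u}=\tfrac12$ and $\hat V_u = \tfrac12\sum_\theta\max(\Pr[\theta\mid s_0],\Pr[\theta\mid s_1]) = \tfrac12(1+\TotalVar(\Pr[\cdot\mid s_0],\Pr[\cdot\mid s_1]))$, using $\max(a,b)=\tfrac12(a+b+|a-b|)$; substituting gives $\advantage_{u_{s_0,s_1}}=\TotalVar(\Pr[\cdot\mid s_0],\Pr[\cdot\mid s_1])$. Because two-secret uniform priors form a subset of all priors, this immediately yields the easy inequality $\max_\pi\advantage_\pi \ge \max_{s_0,s_1}\TotalVar(\Pr[\cdot\mid s_0],\Pr[\cdot\mid s_1])$, equivalently $\beta^* \le 1 - \max_{s_0,s_1}\advantage_{u_{s_0,s_1}}$.

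The crux is the matching upper bound: $\advantage_\pi \le \TotalVar_{\max} := \max_{s_0,s_1}\TotalVar(\Pr[\cdot\mid s_0],\Pr[\cdot\mid s_1])$ for \emph{every} prior $\pi$. Let $w$ be a prior-maximal secret, $\pi_w = V_\pi$. Since $\sum_\theta\Pr[\theta\mid w]=1$, the numerator rewrites as $\hat V_\pi - V_\pi = \sum_\theta\big(\max_s\pi_s\Pr[\theta\mid s] - \pi_w\Pr[\theta\mid w]\big) = \sum_\theta(\max_{s\ne w}\pi_s\Pr[\theta\mid s] - \pi_w\Pr[\theta\mid w])^+$. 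I would then bound the inner max by a sum over $s\ne w$ and apply the pointwise inequality $(\pi_s\Pr[\theta\mid s]-\pi_w\Pr[\theta\mid w])^+ \le \pi_s(\Pr[\theta\mid s]-\Pr[\theta\mid w])^+$, which holds precisely because $\pi_s\le\pi_w$. Summing over $\theta$ and invoking the identity $\sum_\theta(p_\theta-q_\theta)^+ = \TotalVar(p,q)$ for probability distributions gives $\hat V_\pi - V_\pi \le \sum_{s\ne w}\pi_s\TotalVar(\Pr[\cdot\mid s],\Pr[\cdot\mid w]) \le (1-\pi_w)\TotalVar_{\max}$, hence $\advantage_\pi\le\TotalVar_{\max}$.

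The main obstacle is this last paragraph: getting the numerator into the positive-part form and, above all, the weight-correcting inequality that replaces the mismatched priors $\pi_s,\pi_w$ by the single factor $\pi_s$ so the bound collapses to a genuine total variation distance. Once this is in hand, combining the upper and lower bounds gives $\max_\pi\advantage_\pi = \TotalVar_{\max} = \max_{s_0,s_1}\advantage_{u_{s_0,s_1}}$, and therefore $\beta^* = 1 - \max_{s_0,s_1}\advantage_{u_{s_0,s_1}}$, which is the claim.
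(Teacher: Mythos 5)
Your proof is correct, but there is no in-paper proof to compare it against: the paper imports \Cref{thm:mia-game-reduction} verbatim from \citet{chatzikokolakis2020bayes} (their Theorem~1) and never argues it. Your self-contained derivation checks out. The two-secret computation $\advantage_{u_{s_0,s_1}} = \TotalVar\left(\Pr[\cdot\mid s_0], \Pr[\cdot\mid s_1]\right)$ via $\max(a,b)=\tfrac12\left(a+b+|a-b|\right)$ gives the easy inequality, and the crux — rewriting $\hat V_\pi - V_\pi$ exactly as $\sum_\theta \left(\max_{s\neq w}\pi_s\Pr[\theta\mid s] - \pi_w\Pr[\theta\mid w]\right)^+$ and then using $\pi_s\le\pi_w$ to justify the pointwise bound $\left(\pi_s\Pr[\theta\mid s]-\pi_w\Pr[\theta\mid w]\right)^+ \le \pi_s\left(\Pr[\theta\mid s]-\Pr[\theta\mid w]\right)^+$ — is valid, since bounding the positive part of a max by the sum of positive parts loses nothing essential and $\sum_\theta (p_\theta-q_\theta)^+=\TotalVar(p,q)$ for probability distributions. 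A pleasant byproduct: your intermediate identity $\max_\pi \advantage_\pi = \max_{s_0,s_1\in\secretspace}\TotalVar\left(\Pr[\cdot\mid s_0],\Pr[\cdot\mid s_1]\right)$ is exactly \Cref{thm:relations-beta-tv}, which the paper also cites without proof, so your argument establishes both imported results simultaneously. Two small points to tighten: (i) degenerate point-mass priors make the denominator $1-V_\pi$ vanish, so they must be excluded or handled by convention before taking $\max_\pi$; (ii) you implicitly take $\adv$ to be the Bayes-optimal attacker (which is the convention underlying $\beta^*$) and assume the maxima are attained; in the continuous case the sums over $\theta$ should be read as integrals against a dominating measure, as you note.
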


Conveniently, by using Bayes security we further inherit the following relations:
1) Bayes security is directly related to the total variation distance between the two worst-case distributions in the outputs of DP-SGD (\Cref{sec:main-result}), and 2) it is related to $(\varepsilon, \delta)$-DP (\Cref{sec:mia}).

\section{Proof Strategy and Main Result}
\label{sec:main-result}
\begin{table}
\caption{Summary of notation.}
	\begin{tabular}{lp{6cm}}
	\toprule
	Symbol & Meaning\\
	\midrule
	$\challenge$ & Challenge point about which the attacker wishes to learn some property.\\
	$\prop(\challenge)$ & Property of interest.\\
	$O_1, ..., O_T$ & Intermediate weights output by DP-SGD.\\
	$G_1, ..., G_T$ & Intermediate (noisy) gradients output by DP-SGD for the challenge point.\\
	$P_{G(\challenge) \mid S=s}$ & Distribution of the gradient vector $G = (G_1, ..., G_T)$ given a point $\challenge$ s.t. $\prop(\challenge) = s$.\\
	$\sigma$ & Noise parameter.\\
	$C$ & Gradient norm clipping parameter.\\
	$\samplerate = \nicefrac{L}{N}$ & Sampling rate; $N$: training set size, $L$: user-chosen parameter.
\end{tabular}
\end{table}

In this paper, we derive bounds on the security of DP-SGD against record-level MIA (Game~\ref{game:mia-record-generalized}) and AI (Game~\ref{game:ai}).
First, observe that
these threats can be unified in a record-level property inference setup as follows.
Let $\challenge$ be some data record (\ie, the \emph{challenge point}) about which the attacker aims to infer some property $\prop(\challenge)$.
In MIA, the challenge point is chosen from a set of possible challenge points $\challenge \in \{\challenge_1, ..., \challenge_M\}$, and $\prop(\challenge)$ is an index to that set such that $\challenge = \challenge_{\prop(\challenge)}$.
In AI, for an arbitrary challenge point $\challenge$, composed of the concatenation $\challenge = \concat{\varphi}{s}$, where $s \in \mathcal{A}$ represents a sensitive attribute, the property is $\prop(\challenge) = s$.
In our main result, we assume that $\prop$ is a bijection: there is exactly one challenge point $\prop(\challenge)$ for each property $s \in \codom(\prop)$;
observe that this is satisfied by construction in our two threat models.\footnote{We suspect that this assumption can be relaxed: if $\prop$ was not a bijection, there may be two challenge points $\challengezero \neq \challengeone$ satisfying the same property $\prop(\challengezero) = \prop(\challengeone) = s$; but this can only make the attack easier, as it skews the prior distribution on $s$, thereby making it a more probable guess.
}

Let $S = \prop(\challenge)$ be a random variable representing the secret property.
The attacker aims to guess $S$ given the intermediate models output by DP-SGD, which we denote by the random vector $O = (O_0, O_1, ..., O_T)$.
In the spirit of quantitative information flow~\cite{smith2009foundations}, this can be seen as an information theoretic channel, where the relation between $S$ and $O$ is ruled by the posterior distribution $P_{O \mid S}$.
The Bayes security of this channel, $\beta^*(P_{O \mid S})$, measures the \textit{additional leakage} about the secret $S$ that an attacker can exploit by observing $O$.

\paragraph*{Looking ahead.}
In this section, we prove our main result: a bound on the Bayes security of DP-SGD against the record-level property inference attack;
we later specialize this bound to the cases of MIA (\Cref{thm:mia}) and AI (\Cref{thm:ai}).
We proceed as follows:
\begin{enumerate}
	\item First, we show one only needs to measure the risk for the two worst-case property values (challenge points) (\Cref{sec:mia-two-challenge-points}), as a consequence of \Cref{thm:mia-game-reduction}.
	\item We show that the noise on the model weights coming from the training set can be neglected for our analysis: it suffices to compute Bayes security for the gradient of the challenge point $\challenge$. We then observe that gradients follow a Gaussian mixture distribution.
	\item We prove that a mixture of Gaussians can be approximated by a single Gaussian distribution, with an error term that gets smaller as the noise parameter ($\sigma$) increases (\Cref{thm:small-approximation-error}).
	\item We obtain a bound on the Bayes security of DP-SGD by bounding the total variation distance between the distribution of the gradients for the two challenge points
	that correspond to the two worst-case property values.
\end{enumerate}

\subsection{All You Need Is Two Points ...}
\label{sec:mia-two-challenge-points}

In both Games~\ref{game:mia-record-generalized} and~\ref{game:ai}, $S$ takes values from a potentially large set.
Further, its prior distribution may be skewed: some values of $S$ may be more likely than others.
For example, in MIA, one data record may be \emph{more likely} than another to be a member;
this is captured by our formalization in Game~\ref{game:mia-record-generalized},
where $\pi$ may assign more weight to one particular record.

To solve this issue, we apply the fact that the generalized advantage is maximized over a uniform prior; this is an immediate consequence of \Cref{thm:mia-game-reduction} by \citet{chatzikokolakis2020bayes}.
This implies that, when studying the security of DP-SGD against these threats, it is sufficient to limit the range of $S$ to the two values that are the easiest to distinguish for the attacker, and set $\pi$ to the uniform prior.

For example, under the MIA threat model, this means that by measuring the security for just two challenge points ($M=2$) that are equally likely to be members, we obtain a bound on the security for arbitrary values of $M \geq 2$.
Equivalently, for AI, it is sufficient to look at the two \emph{leakiest} attribute values.

\subsection{... and Intermediate Gradients}
\label{sec:just-need-intermediate-gradients}
The second step in our analysis is the observation that an attacker obtains maximal advantage if they are given direct access to the intermediate gradients, rather than model weights.
Formally, the attacker observes a random vector $O(\challenge) = (O_0(\challenge), ..., O_T(\challenge))$, the intermediate model weights;
in our notation, we make the dependence on the challenge point $\challenge$ explicit
where needed.
$O$ is such that $O_0 = \theta_0$ and for $t\geq0$
\[
	O_t(\challenge) = \frac{1}{L}\Bigg( \sum_{i=1}^{N-1}\bar{\g}_t(z_i)\ber{\samplerate} + \bar{\g}_t(\challenge)\ber{\samplerate} + \mathcal{N}(0, \sigma^2 C^2 \Id)\Bigg)
\]
where $\ber{\samplerate}$ a Bernoulli distribution, with $\samplerate = \frac{L}{N}$,
and $\mathcal{N}(0, \sigma^2 C^2 \Id)$ is isotropic Gaussian noise with variance $\sigma^2 C^2$.

We use the notation $P_{O(\challenge) \mid S=s}$ to indicate the distribution of the intermediate weights, conditioned on the fact that the challenge point satisfies $\challenge: \prop(\challenge) = s$.

Now, consider the random vector $G = (G_1, ..., G_T)$:
\begin{equation}
	\label{eq:gradients}
	G_t = \bar{\g}_t(\challenge)\ber{\frac{L}{N}} + \mathcal{N}(0, \sigma^2 C^2 \Id)
\end{equation}

We observe that the distribution $P_{O \mid S}$ can be obtained via postprocessing from $P_{G \mid S}$; since the Bayes security of a channel cannot decrease by postprocessing,
we have:
\begin{corollary}[Consequence of Theorem 4 in \cite{chatzikokolakis2020bayes}]
	\label{thm:gradients-reduction}
	$$\beta^*(P_{O \mid S}) \geq \beta^*(P_{G \mid S}) \,.$$
\end{corollary}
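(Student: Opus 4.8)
The plan is to realize the claimed inequality as an instance of the data-processing (post-processing) property of Bayes security: by Theorem~4 of \cite{chatzikokolakis2020bayes}, if a channel $B$ is obtained from a channel $A$ by composing its output with a stochastic map that does not depend on the secret, then $\beta^*(B) \geq \beta^*(A)$. Hence it suffices to exhibit a randomized map $R$, independent of $S$, such that feeding $R$ with the gradient vector $G = (G_1, \dots, G_T)$ reproduces the law of the trajectory $O = (O_0, \dots, O_T)$ conditionally on $S$; equivalently, to show that $S \to G \to O$ forms a Markov chain.

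First I would identify where the secret enters the DP-SGD recursion. Writing the update at step $t$ as
\[
	O_t = \frac{1}{L}\Bigl(\sum_{i=1}^{N-1}\bar{\g}_t(z_i)\,B_{t,i} + \bar{\g}_t(\challenge)\,B_t + \xi_t\Bigr),
\]
with $B_{t,i}, B_t \sim \ber{\samplerate}$ and $\xi_t \sim \gauss{0, \sigma^2 C^2 \Id}$, the only $S$-dependent terms are those involving the challenge point $\challenge$, namely $\bar{\g}_t(\challenge)\,B_t + \xi_t$. By \eqr{eq:gradients} this is exactly $G_t$. The remaining ingredients---the initialization $\btheta_0$, the fixed non-challenge records $z_1, \dots, z_{N-1}$, their subsampling Bernoullis $B_{t,i}$, and (having absorbed the noise into $G_t$) all Gaussian randomness---are independent of $S$.

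Next I would define $R$ to reconstruct $O$ from $G$ sequentially. Set $R(G)_0 = \btheta_0$; then, given the outputs produced so far, recover the current weights $\btheta_t$ (an invertible affine function of the past updates and the learning rates), evaluate the clipped gradients $\bar{\g}_t(z_i)$ of the non-challenge records at $\btheta_t$, draw fresh Bernoullis $B_{t,i} \sim \ber{\samplerate}$, and output $R(G)_t = \frac{1}{L}(\sum_{i=1}^{N-1}\bar{\g}_t(z_i)\,B_{t,i} + G_t)$. Comparing with the display above shows that, under the coupling induced by the true DP-SGD process, $R(G) = O$, while $R$ queries nothing about $S$. This establishes the Markov chain and, via Theorem~4 of \cite{chatzikokolakis2020bayes}, the inequality $\beta^*(P_{O \mid S}) \geq \beta^*(P_{G \mid S})$.

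The main obstacle is the recursive coupling: the challenge gradient hidden inside $G_t$ is evaluated at the weights $\btheta_t$, which themselves depend on the earlier observations $O_{<t}$ that $R$ is in the process of generating. I would therefore take care to define the joint law of $(S, G, O)$ through the genuine training run and verify that this recursion is well-posed---that at each step the weights needed to interpret $G_t$ coincide with those reconstructed by $R$ from $O_{<t}$---so that $O$ is honestly a secret-independent post-processing of $G$ rather than merely marginally equidistributed. Once this consistency is checked, the remainder is a direct appeal to the post-processing property.
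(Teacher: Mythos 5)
Your proposal takes essentially the same route as the paper: the paper's entire proof of this corollary is the one-line observation that $P_{O \mid S}$ is obtained from $P_{G \mid S}$ by post-processing---adding the secret-independent non-challenge gradient terms with fresh subsampling randomness---followed by the data-processing property of Bayes security (Theorem 4 of \cite{chatzikokolakis2020bayes}), which is exactly your argument. The recursive-coupling obstacle you flag is genuine in a fully weight-dependent model (there your reconstruction map $R$ would need the \emph{same} non-challenge Bernoullis that already influenced $G$ through the weights, so $R$'s randomness would not be independent of its input and the factorization could fail), but the paper dissolves it implicitly by treating the per-step gradients $\bar{\g}_t(\cdot)$ as fixed vectors rather than functions of the random trajectory---this is the same modeling choice that lets it write $P_{G}$ as an explicit $2^T$-component Gaussian mixture---under which your map $R$ uses randomness independent of $(G, S)$ and the Markov chain $S \to G \to O$ holds exactly.
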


Intuitively, an attacker has a better (or equal) advantage when attacking channel $P_{G\mid S}$ than $P_{O\mid S}$.
The reason is that $G$ carries at least as much information about the challenge point $\challenge$ as $O$.
We shall henceforth study the security of $P_{G\mid S}$.

\paragraph*{Gradients distribution.}
We provide an explicit expression for $P_{G(\challenge)}$, for a generic challenge point $\challenge$.

At each step, the intermediate gradient is a Gaussian, centered either in $\bar{\g}(\challenge)$ with probability $\samplerate$, or $0$ otherwise.
This means that $P_{G}$ is a mixture of $2^T$ Gaussians:
intuitively, $G$ takes values from a Gaussian centered in $(0, ..., 0)$ with probability $(1-\samplerate)^T$ (i.e., the gradient is never sampled), from a Gaussian centered in
$(\bar{\g}(\challenge), 0, ..., 0)$ with probability $\samplerate(1-\samplerate)^{T-1}$ (i.e., $\bar{\g}(\challenge)$ is sampled in the first step only), and so on.
Let $b \in \{0, 1\}^T$ be a binary vector, where $b_t = 1$ means that $G_t$ is centered in $\bar{\g}(\challenge)$, and $b_t = 0$ means that $G_t$ is centered in $0$;
here, the role of $b$ is that of a mask that indicates in which steps the challenge point is sampled.
Then we can write the distribution as:
$$P_{G} = \sum_{b \in \{0, 1\}^T} c_b \prod_{t=1}^T \mathcal{N}(\bar{\g}(\challenge) \odot b, \sigma^2 I) \,,$$
where $x \odot y$ is the Hadamard (i.e., entrywise) product of vectors $x$ and $y$, and $c_b$ is the probability of observing $b$, $c_b = \samplerate^{|b|}(1-\samplerate)^{T-|b|}$, where $|b|$ is the number of $1$'s in $b$.

\subsection{Bayes Security of DP-SGD}

To determine the Bayes security of DP-SGD, we will use the following relation.

\begin{proposition}[Bayes security and total variation~\cite{chatzikokolakis2020bayes}]
	Let $\mathcal{M}: \mathbb{S} \rightarrow \mathbb{O}$
	be a randomized algorithm.
	Then:
	$$\beta^*(\mathcal{M}) = 1-\max_{s_0, s_1 \in \secretspace}\TotalVar(P_{\mathcal{M}(S) \mid S=s_0}, P_{\mathcal{M}(S) \mid S=s_1})\,,$$
	where $P_{\mathcal{M}(S) \mid S}$ is the posterior distribution of the mechanism's output $\mathcal{M}(S)$ given some input random variable $S$, and $\TotalVar$ is the total variation distance\footnote{
		Consider two measures $P$ and $Q$ on the same measurable space $(\dataspace, \mathcal{F})$; their total variation distance is:
		$\TotalVar(P, Q) = \sup_{A \in \mathcal{F}} |P(A) - Q(A)|$.
	}.
	\label{thm:relations-beta-tv}
\end{proposition}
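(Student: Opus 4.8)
The plan is to reduce the claim to a classical identity between the Bayes error of binary hypothesis testing and total variation distance, using \Cref{thm:mia-game-reduction} as the entry point. By that theorem, the maximal advantage --- and hence $\beta^*(\mathcal{M})$ --- is attained on a uniform prior $u_{s_0,s_1}$ over two secrets $s_0, s_1 \in \secretspace$, so it suffices to show that for every such pair
\[
	\advantage_{u_{s_0,s_1}} = \TotalVar\!\left(P_{\mathcal{M}(S)\mid S=s_0},\, P_{\mathcal{M}(S)\mid S=s_1}\right).
\]
Taking the maximum over $s_0, s_1$ and substituting into the definition $\beta^* = 1 - \max_\pi \advantage_\pi$ then yields the statement directly.

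To establish this per-pair identity, I would first evaluate the two success probabilities appearing in the definition of $\advantage_\pi$ under the uniform binary prior. With no access to the output, the best guess succeeds with probability $Pr[\adv(u_{s_0,s_1}) = S] = \tfrac12$, so the denominator $1 - Pr[\adv(u_{s_0,s_1}) = S]$ equals $\tfrac12$ and the advantage collapses to $\advantage_{u_{s_0,s_1}} = 2\,Pr[\adv(u_{s_0,s_1}, \mathcal{M}(S)) = S] - 1$. It then remains to compute the success probability of the optimal (Bayes) attacker who observes the output.

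The core step is the identity $Pr[\adv(u_{s_0,s_1}, \mathcal{M}(S)) = S] = \tfrac12 + \tfrac12\,\TotalVar(P_0, P_1)$, writing $P_i = P_{\mathcal{M}(S)\mid S=s_i}$ with densities $p_i$ (with respect to a common dominating measure). The optimal attacker is the maximum-a-posteriori rule, which under the uniform prior predicts the hypothesis with the larger likelihood; its success probability is $\tfrac12\int \max(p_0, p_1)$. Using $\max(a,b) = \tfrac12(a + b + |a-b|)$ together with $\int |p_0 - p_1| = 2\,\TotalVar(P_0,P_1)$ (the Scheff\'e characterization of total variation) gives $\tfrac12\int\max(p_0,p_1) = \tfrac12 + \tfrac12\,\TotalVar(P_0,P_1)$. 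Substituting back yields $\advantage_{u_{s_0,s_1}} = \TotalVar(P_0,P_1)$, completing the per-pair identity.

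The main obstacle is justifying that the MAP rule is genuinely optimal among \emph{all} (possibly randomized) attackers, so that its error coincides with the Bayes error --- the variational step identifying $\int\min(p_0,p_1)$ (equivalently $1 - \TotalVar$) as the minimal error probability over all decision rules. This is the Scheff\'e/Neyman--Pearson argument; it is measure-theoretically clean when densities exist, but some care is needed to phrase it for a general mechanism $\mathcal{M}$ without assuming dominating densities. One can instead argue directly from the defining supremum $\TotalVar(P,Q) = \sup_A |P(A) - Q(A)|$, observing that the supremum is attained on the likelihood-ratio superlevel set $\{p_1 \geq p_0\}$, which is exactly the acceptance region of the optimal test, so that no attacker can do better. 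All remaining manipulations are routine.
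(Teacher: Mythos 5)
Your proof is correct. Note that the paper does not actually prove this proposition --- it is imported as-is from \citet{chatzikokolakis2020bayes} --- so there is no in-paper proof to compare against; your derivation (reduction to a uniform two-point prior via \Cref{thm:mia-game-reduction}, then the classical identity between the Bayes error of binary hypothesis testing and total variation, via Scheff\'e's identity and the MAP rule) is the standard route by which this result is established. The one loose end you flag, the existence of densities, is a non-issue: any two probability measures $P_0, P_1$ are dominated by $\mu = \tfrac{1}{2}(P_0+P_1)$, so the Scheff\'e computation and the optimality of the MAP rule over all randomized attackers (any randomized rule's success probability is a pointwise convex combination bounded by $\max_i \pi_i p_i(o)$) go through with no additional assumptions on $\mathcal{M}$.
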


Based on this, we compute the Bayes security of DP-SGD as the maximal total variation between $P_{G\mid S=s_0}$ and $P_{G\mid S=s_1}$, across all pairs $s_0, s_1 \in \secretspace$;
as observed above, $P_{G\mid S}$ is a mixture of Gaussians.
Unfortunately, there are no known tight bounds on the divergence between mixtures of Gaussians.

This is not an unknown obstacle: all previous DP-SGD analyses (\eg, DP-based) have encountered its analog.
FFT-based accountant methods address this issue by discretization: for a fine enough grid, one can empirically measure the divergence between the distributions.
Recent work by \citet{mahloujifar2022optimal} uses Monte Carlo estimations, by sampling from the mixture distribution.
Both approaches, although valid, come with high computational costs.

In this paper, we study the benefits of a different strategy: we observe that the mixture distribution generated by DP-SGD
can be approximated with a Gaussian distribution \textit{for certain choices of parameters}.
Fortunately, these parameter choices happen to be of interest for most practical purposes.

\paragraph*{Approximating a mixture with a Gaussian.}
\begin{figure}
	\centering
	\includegraphics[width=0.8\linewidth]{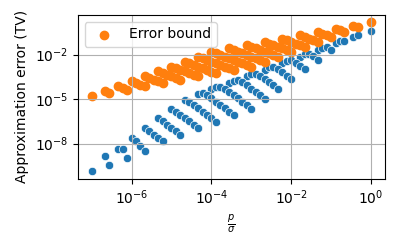}
	\caption{
		We compare the error induced by approximating a mixture of Gaussians with a Gaussian (\Cref{thm:small-approximation-error}). The error, measured as the total variation between the original and approximate distributions, is computed via numerical integration
		for a fixed $T=1$.
		A small ratio between the sampling rate $\samplerate$ and the noise parameter $\sigma$ ensure the error is negligible.
    }
	\label{fig:tv-error-vs-ratio}
\end{figure}

The proof of our main result relies on computing the total variation between two Gaussian mixtures.
Our first observation is that, in some cases, a mixture of Gaussians can be approximated by a Gaussian.
We formalize this in the following result, which shows the error committed when making this approximation in terms of the total variation between the original and approximate distributions.
For clarity, we let $\samplerate = \nicefrac{L}{N}$.

\begin{restatable}{proposition}{thmsmallapproximationerror}
\label{thm:small-approximation-error}
Let $f_{\mathcal{M}}$ be a Gaussian mixture defined as follows.
For a mean vector $\mu = (\mu_1, ..., \mu_T)$ and covariance matrix $\sigma^2 C^2 \Id_T$,
and $C = \max_{j=1}^T \mu_j$, let
$f_{\mathcal{M}}(x) = \sum_{b \in \{0, 1\}^T} \pi_b f_{\mathcal{N}(\mu b, \sigma^2C^2)}(x) \,.$
The $i$-th component takes values from $f_{\mathcal{N}(\mu_i, \sigma^2C^2)}$ with probability $\samplerate \in [0,1]$, or from $f_{\mathcal{N}(0, \sigma^2C^2)}$ otherwise.
Here, $\pi_b = \samplerate^{|b|}(1-\samplerate)^{T-|b|}$.
The error committed in approximating $f_{\mathcal{M}}$ with
$f_{\mathcal{N}(\samplerate\mu, \sigma^2C^2)}$ is:
$$\TotalVar(f_{\mathcal{M}}, f_{\mathcal{N}(\samplerate\mu, \sigma^2C^2)})
	= O\left(\frac{\sqrt{\samplerate T}}{\sigma}\right)$$
\end{restatable}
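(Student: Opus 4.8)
The plan is to sidestep the absence of a closed form for the total variation between a Gaussian mixture and a single Gaussian by exploiting the convexity of total variation in its first argument. Writing the mixture as $f_{\mathcal M} = \sum_{b\in\{0,1\}^T} \pi_b\, f_{\N{\mu\odot b}{\sigma^2 C^2\Id}}$ and using $\sum_b \pi_b = 1$, the triangle inequality for the $L^1$ distance (equivalently, Jensen applied to $P\mapsto\TotalVar(P,Q)$) gives
\[
\TotalVar\!\left(f_{\mathcal M},\, f_{\N{\samplerate\mu}{\sigma^2C^2\Id}}\right) \le \sum_{b\in\{0,1\}^T}\pi_b\,\TotalVar\!\left(f_{\N{\mu\odot b}{\sigma^2C^2\Id}},\, f_{\N{\samplerate\mu}{\sigma^2C^2\Id}}\right).
\]
This reduces the task to controlling the total variation between two Gaussians that share the same isotropic covariance and differ only in their means, a quantity with a known closed form.

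Second, I would invoke the standard identity for the total variation between Gaussians with common covariance $\sigma^2C^2\Id$: it equals $2\Phi(\Delta/2)-1=\erf\!\big(\Delta/(2\sqrt2)\big)$, where $\Delta=\|\mu\odot b-\samplerate\mu\|/(\sigma C)$ is the Mahalanobis separation of the means and $\Phi$ is the standard normal CDF (by rotational symmetry this reduces to the one-dimensional case along the direction of the mean difference). Applying the elementary linear bound $2\Phi(t)-1\le\sqrt{2/\pi}\,t$, which follows from $\phi\le 1/\sqrt{2\pi}$, each summand is at most $\|\mu\odot b-\samplerate\mu\|/(\sigma C\sqrt{2\pi})$. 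Recognizing the weighted sum over $b$ as an expectation over a random vector whose coordinates are independent $\ber{\samplerate}$ (so that $\E[b]=\samplerate\mathbf{1}$), this yields
\[
\TotalVar\!\left(f_{\mathcal M},\, f_{\N{\samplerate\mu}{\sigma^2C^2\Id}}\right) \le \frac{1}{\sigma C\sqrt{2\pi}}\,\E_b\!\left[\,\|\mu\odot(b-\samplerate\mathbf{1})\|\,\right].
\]

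Third, I would bound the expectation by Cauchy--Schwarz (Jensen), passing to the second moment: $\E_b\|\mu\odot(b-\samplerate\mathbf{1})\| \le \sqrt{\E_b\|\mu\odot(b-\samplerate\mathbf{1})\|^2} = \sqrt{\samplerate(1-\samplerate)\sum_{j}\mu_j^2}$, since the coordinates are independent with $\mathrm{Var}(b_j)=\samplerate(1-\samplerate)$. Finally, using the normalization $C=\max_j\mu_j$, so that $\sum_j\mu_j^2\le TC^2$, the right-hand side is at most $C\sqrt{\samplerate(1-\samplerate)T}$; dividing by $\sigma C$ produces $\TotalVar = O(\sqrt{\samplerate T}/\sigma)$, as claimed.

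I expect the only genuinely non-routine step to be the first: the observation that convexity of total variation lets one replace the intractable mixture-versus-Gaussian comparison by a weighted average of tractable Gaussian-versus-Gaussian comparisons, each of which is just a mean shift. Everything afterward is the closed-form Gaussian total-variation identity together with a one-line second-moment computation. The mild cost is that convexity is lossy, as it discards cancellation between the mixture components; consequently the rate $\sqrt{\samplerate T}/\sigma$ is likely not tight, and a Hellinger- or $\chi^2$-based tensorization over the $T$ independent coordinates would presumably sharpen it. Nonetheless, this bound already suffices to render the approximation error negligible in the parameter regime relevant to DP-SGD, which is all the subsequent analysis requires.
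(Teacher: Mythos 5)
Your proof is correct, and it reaches the stated $O\left(\sqrt{\samplerate T}/\sigma\right)$ rate by a genuinely different route than the paper. The paper never works with total variation directly in this proposition: it first applies Pinsker's inequality $\TotalVar \leq \sqrt{\tfrac{1}{2}D_{KL}}$, then uses convexity of the KL divergence to bound the mixture-vs-Gaussian KL by the weighted sum of component KLs, and finally exploits the exact closed form of the KL between equal-covariance Gaussians, whose quadratic dependence on the mean shift lets the Bernoulli expectation $\sum_b \pi_b \sum_i \mu_i^2 (b_i - \samplerate)^2 = \samplerate(1-\samplerate)\sum_i\mu_i^2$ be computed exactly with no further inequality. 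You instead apply convexity to the total variation itself, invoke the Gaussian TV closed form $\erf\left(\fnorm{\mu_0-\mu_1}/(2\sqrt{2}\sigma C)\right)$ (which the paper states as its Corollary on Gaussian total variation and uses only in the main theorem, not here), linearize the erf, and then need one extra Cauchy--Schwarz step to pass from $\E_b\|\mu\odot(b-\samplerate\mathbf{1})\|$ to the second moment. The trade-off: your route stays in a single metric, avoids Pinsker, and yields a marginally sharper constant, $\sqrt{\samplerate(1-\samplerate)T}/(\sqrt{2\pi}\,\sigma)$ versus the paper's $\sqrt{\samplerate(1-\samplerate)T}/(2\sigma)$; the paper's route keeps every intermediate computation exact (polynomial moments rather than a norm expectation) and leans on two standard citable facts (Pinsker and the mixture-KL convexity bound of Hershey--Olsen). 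Both arguments share the same lossiness — convexity ignores cancellation among mixture components — so neither is tight, as you correctly note. One small shared wrinkle: bounding $\sum_j \mu_j^2 \leq TC^2$ uses $C = \max_j |\mu_j|$ rather than $\max_j \mu_j$; the paper's proof makes the identical implicit assumption, which is harmless in the DP-SGD application since there the $\mu_j$ are norms of clipped gradients.
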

Proofs are in the appendix.

\paragraph{The total variation between two Gaussians.}
The Bayes security of DP-SGD reduces to computing the total variation between two Gaussian distributions that are identically-scaled (with isotropic covariance matrix).
For this step, we use the following closed form expression, which was derived by \citet{Devroye:2018} using a result by \citet{Barsov:1987}:

\begin{corollary}[From \citet{Barsov:1987}, Theorem 1]
\label{thm:tv-bound}
Let $\mu_0, \mu_1 \in \mathbb{R}^{m \times n}$, $\sigma > 0$. Then,
\begin{equation*}
\TotalVar\left(\N{\mu_0}{\sigma^2 \mathbf{I}}, \N{\mu_1}{\sigma^2 \mathbf{I}}\right)
= \erf\left(\frac{\fnorm{\mu_0 - \mu_1}}{2\sqrt{2}\sigma}\right)
\end{equation*}
where $\fnorm{A} = \sqrt{\mathrm{tr}(A A^\intercal)}$ is the Frobenius norm.
\end{corollary}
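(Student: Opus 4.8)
The plan is to reduce the $mn$-dimensional computation to a one-dimensional one by exploiting the rotational symmetry of the isotropic Gaussian, and then evaluate the resulting univariate total variation in closed form. First I would vectorize, identifying $\R^{m \times n}$ with $\R^{mn}$; under this identification the Frobenius norm becomes the Euclidean norm and the law $\N{\mu}{\sigma^2 \Id}$ is unchanged, so it suffices to treat $\mu_0, \mu_1$ as vectors. Writing $d = \fnorm{\mu_0 - \mu_1}$, the case $d = 0$ is immediate since both sides vanish, so I assume $d > 0$ and set $u = (\mu_1 - \mu_0)/d$.

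The key reduction is to pick an orthogonal transformation sending $u$ to the first coordinate axis $e_1$. Because total variation is invariant under any measurable bijection applied to both of its arguments, this rotation preserves $\TotalVar$ while aligning the mean difference with $e_1$; after rotating, $\mu_0$ and $\mu_1$ agree in every coordinate except the first, where they differ by $d$. Since the coordinates of an isotropic Gaussian are independent, both measures factor as a product of a univariate Gaussian in the first coordinate and an \emph{identical} remaining factor $R$ over the other $mn - 1$ coordinates. I would then invoke the factorization property of total variation: for product measures $P_1 \otimes R$ and $Q_1 \otimes R$ sharing the common factor $R$, one has $\TotalVar(P_1 \otimes R, Q_1 \otimes R) = \TotalVar(P_1, Q_1)$, because the common density integrates to one. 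This collapses the problem to $\TotalVar(\N{0}{\sigma^2}, \N{d}{\sigma^2})$, using a further (TV-preserving) translation to center the first Gaussian at the origin.

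For the univariate problem the two densities cross exactly at the midpoint $d/2$, so $\{x > d/2\}$ is the optimal discriminating event and $\TotalVar = P_{\N{d}{\sigma^2}}(X > d/2) - P_{\N{0}{\sigma^2}}(X > d/2) = 2\Phi(d/(2\sigma)) - 1$, where $\Phi$ denotes the standard normal CDF. Converting via the identity $\Phi(z) = \tfrac{1}{2}(1 + \erf(z/\sqrt{2}))$ gives $\TotalVar = \erf(d/(2\sqrt{2}\,\sigma))$, which is exactly the claimed expression once $d = \fnorm{\mu_0 - \mu_1}$ is substituted back in.

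The main obstacle is the dimension-reduction step rather than any arithmetic: one must argue carefully that total variation is invariant under the rotation (a consequence of its definition as a supremum over events, transported along the bijection) and that the directions orthogonal to $u$ genuinely drop out through the product-measure factorization. Once this is established, the univariate integral and the error-function identity are entirely routine. An alternative route that sidesteps the explicit factorization would compute the $L^1$ distance $\tfrac{1}{2}\int |p_0 - p_1|$ directly, observing that the log-likelihood ratio between the two Gaussians is an affine function of $\langle u, x\rangle$ whose law is univariate Gaussian under each measure; this leads to the very same one-dimensional integral.
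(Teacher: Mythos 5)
Your proof is correct, but there is nothing in the paper to compare it against: the paper does not prove this statement at all, importing it as a known result (Theorem~1 of Barsov and Ulyanov, as restated by Devroye et al.), which is why it is labelled a corollary. Your derivation is the standard self-contained argument, and every step checks out: vectorization turns the Frobenius norm into a Euclidean norm; total variation is preserved by the rotation (for which you need the map and its inverse to both be measurable --- true for orthogonal maps --- together with the fact that an orthogonal $U$ sends $\mathcal{N}(\mu,\sigma^2\mathbf{I})$ to $\mathcal{N}(U\mu,\sigma^2\mathbf{I})$); the factorization $\TotalVar(P_1\otimes R,\,Q_1\otimes R)=\TotalVar(P_1,Q_1)$ for a common factor $R$ follows from the $L^1$ (Scheff\'e) representation of total variation; the optimal event $\{x>d/2\}$ in one dimension is justified by monotonicity of the likelihood ratio; and the arithmetic $2\Phi\left(d/(2\sigma)\right)-1=\erf\left(d/(2\sqrt{2}\sigma)\right)$ reproduces exactly the stated constant. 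What your route buys is self-containedness: a reader can verify, without chasing the Barsov--Ulyanov reference, the one closed-form identity on which the paper's main theorem rests. What the citation buys the authors is brevity and a pointer to the more general shared-covariance statement, of which the isotropic case used here is a special instance. The only point to tighten is your invariance claim: as stated, ``invariant under any measurable bijection'' is too strong (the inverse must also be measurable, or one only gets an inequality from the sup-over-events definition); for rotations and translations this is immediate, so the proof stands.
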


\paragraph{Main result.}

We can now state our main result: a closed-form bound on the security of DP-SGD against record-level property inference.
The bound depends on a variable, $\Delta_\prop$, whose value depends on the threat model (and, consequently, property of interest $\prop$), defined as:
\begin{equation}
	\label{eq:delta-prop}
	\Delta_\prop = \max_{s_0, s_1 \in \codom(\prop)} \fnorm{\bar{\g}(\prop^{-1}(s_0)) - \bar{\g}(\prop^{-1}(s_1))} \,.
\end{equation}
Here, $\bar{\g}(\challenge) = (\bar{\g}_1(\challenge), ..., \bar{\g}_T(\challenge))$ is the sequence of gradients computed by DP-SGD on a challenge point $\challenge$;
$\prop^{-1}(s)$ is the challenge point that satisfies $\prop(\prop^{-1}(s)) = s$, which is unique by assumption.

Intuitively, $\Delta_\prop$ indicates how much influence each property value has on the gradients, and it takes higher values the more the gradients change when the property value changes; in particular, it captures the worst-case scenario, when the attacker has to distinguish between the two property values that leak the most information.
We will provide an explicit value for $\Delta_\prop$ for the case of MIA (\Cref{thm:mia}) and AI (\Cref{thm:ai}) in the next sections.

The Bayes security of DP-SGD against record-level property inference is as follows:

\begin{restatable}{theorem}{thmmainresult}
	\label{thm:main}
	Assume that $\prop$ is a bijection, and let $\Delta_\prop$ be defined as in \Cref{eq:delta-prop}.
	The Bayes security of DP-SGD with respect to the record-level property inference threat described in \Cref{sec:main-result} is:
	$$\beta^*(P_{O \mid S}) \geq 1-\erf\left(\samplerate\frac{\Delta_\prop}{2\sqrt{2}\sigma C}\right) - O\left(\frac{\sqrt{\samplerate T}}{\sigma}\right)\,.$$
\end{restatable}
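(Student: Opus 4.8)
The plan is to chain together the four preliminary results, reducing the comparison of two intractable $2^T$-component Gaussian mixtures to a single closed-form $\erf$ term plus a controllable error. First I would apply \Cref{thm:gradients-reduction} to replace the observable weights by the intermediate gradients, so that $\beta^*(P_{O \mid S}) \ge \beta^*(P_{G \mid S})$; the post-processing inequality does the work here, and it then suffices to lower-bound the Bayes security of the gradient channel. By \Cref{thm:relations-beta-tv} this Bayes security equals $1 - \max_{s_0, s_1} \TotalVar(P_{G \mid S=s_0}, P_{G \mid S=s_1})$, so the problem becomes an \emph{upper} bound on the worst-case total variation between the two conditional gradient distributions, each of which is the $2^T$-component mixture of \Cref{eq:gradients}.

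The central step is a triangle-inequality decomposition. For a fixed pair $s_0, s_1$, write $\mathcal{N}_i = \N{\samplerate\,\bar{\g}(\prop^{-1}(s_i))}{\sigma^2 C^2 \Id}$ for the single Gaussian that \Cref{thm:small-approximation-error} uses to approximate the mixture $P_{G \mid S=s_i}$. Then
\begin{align*}
\TotalVar(P_{G \mid S=s_0}, P_{G \mid S=s_1})
&\le \TotalVar(P_{G \mid S=s_0}, \mathcal{N}_0) \\
&\quad + \TotalVar(\mathcal{N}_0, \mathcal{N}_1) \\
&\quad + \TotalVar(\mathcal{N}_1, P_{G \mid S=s_1}).
\end{align*}
The first and third terms are each $O(\sqrt{\samplerate T}/\sigma)$ by \Cref{thm:small-approximation-error}, and crucially this bound is uniform in the secret (it depends only on $\samplerate, T, \sigma$), so both survive the outer maximization. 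For the middle term, $\mathcal{N}_0$ and $\mathcal{N}_1$ share the isotropic covariance $\sigma^2 C^2 \Id$, so \Cref{thm:tv-bound} (with effective scale $\sigma C$) gives it in closed form as $\erf\!\big(\samplerate\,\fnorm{\bar{\g}(\prop^{-1}(s_0)) - \bar{\g}(\prop^{-1}(s_1))} / (2\sqrt{2}\sigma C)\big)$. Maximizing over $s_0, s_1$ and recognizing \Cref{eq:delta-prop} turns this into $\erf(\samplerate \Delta_\prop / (2\sqrt{2}\sigma C))$ by monotonicity of $\erf$.

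Assembling the pieces yields
\begin{align*}
\beta^*(P_{O \mid S})
&\ge 1 - \max_{s_0, s_1}\TotalVar(P_{G \mid S=s_0}, P_{G \mid S=s_1}) \\
&\ge 1 - \erf\!\left(\samplerate\frac{\Delta_\prop}{2\sqrt{2}\sigma C}\right) - O\!\left(\frac{\sqrt{\samplerate T}}{\sigma}\right),
\end{align*}
as claimed. The first two reductions are routine; the substantive content, and the main obstacle, is entirely deferred to \Cref{thm:small-approximation-error}, since the error of replacing a mixture by a single Gaussian must be controlled uniformly over the secrets and vanish at rate $\sqrt{\samplerate T}/\sigma$ — this is exactly what makes the triangle-inequality detour worthwhile rather than lossy. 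I would therefore spend almost no effort on the theorem itself once that proposition is in hand, and verify carefully only that the approximation error is independent of which pair of challenge points is chosen, so that it factors cleanly out of the maximum.
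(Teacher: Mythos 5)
Your proposal is correct and follows essentially the same route as the paper's own proof: the post-processing reduction (\Cref{thm:gradients-reduction}), the total-variation characterization of Bayes security (\Cref{thm:relations-beta-tv}), the three-term triangle-inequality decomposition with the mixture-to-Gaussian approximation error from \Cref{thm:small-approximation-error}, and the closed-form Gaussian total variation from \Cref{thm:tv-bound}, followed by maximization over secret pairs. Your explicit check that the approximation error is uniform in the secrets (so it survives the outer maximum) is a point the paper leaves implicit, but it is the same argument.
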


The proof combines the approximation of a mixture with a Gaussian (\Cref{thm:small-approximation-error}) with the bound on the total variation between two Gaussians (\Cref{thm:tv-bound}).

In the next two sections, we apply this result to bound the security against MIA (\Cref{sec:mia}) and AI (\Cref{sec:attribute-inference}).

\section{Membership Inference}
\label{sec:mia}
\Cref{thm:main} gives a bound for the Bayes security of DP-SGD against a record-level property inference attack.
In this section, we apply this result to derive a bound on MIA,
we study its tightness in comparison with existing estimates based on DP accountants, and we show how our bound relates to other metrics, such as the TPR$@$FPR of an optimal attacker.

The Bayes security of DP-SGD against MIA follows as a corollary to \Cref{thm:main}:

\begin{restatable}{corollary}{corollaryMIA}
	\label{thm:mia}
	The Bayes security of DP-SGD against record-level MIA (Game~\ref{game:mia-record-generalized}) is:
        $$\beta^* \geq 1 - \erf\left(\samplerate\frac{\sqrt{T}}{\sqrt{2}\sigma}\right) - O\left(\frac{\sqrt{\samplerate T}}{\sigma}\right) \,.$$
\end{restatable}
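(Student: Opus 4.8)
The plan is to specialize \Cref{thm:main} to the membership inference game by evaluating the worst-case gradient-difference quantity $\Delta_\prop$ for this threat model. Recall from \Cref{eq:delta-prop} that $\Delta_\prop = \max_{s_0, s_1} \fnorm{\bar{\g}(\prop^{-1}(s_0)) - \bar{\g}(\prop^{-1}(s_1))}$, where in the MIA game $\prop^{-1}(s_0) = \challengezero$ and $\prop^{-1}(s_1) = \challengeone$ are the two worst-case challenge records selected by the reduction of \Cref{thm:mia-game-reduction}. The only structural change from the general bound to the corollary is that the clipping constant $C$ disappears, so the entire task reduces to proving $\Delta_\prop \le 2C\sqrt{T}$ for MIA and substituting this into the bound of \Cref{thm:main}.

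First I would invoke the gradient-clipping step of Algorithm~\ref{alg:dpsgd}: every per-step clipped gradient satisfies $\|\bar{\g}_t(\challenge)\|_2 \le C$. Hence, by the triangle inequality, the per-step difference for any pair of challenge records obeys $\|\bar{\g}_t(\challengezero) - \bar{\g}_t(\challengeone)\|_2 \le \|\bar{\g}_t(\challengezero)\|_2 + \|\bar{\g}_t(\challengeone)\|_2 \le 2C$. Since the Frobenius norm in \Cref{eq:delta-prop} is taken over the stacked sequence of $T$ per-step gradients, I would bound it stepwise as $\fnorm{\bar{\g}(\challengezero) - \bar{\g}(\challengeone)}^2 = \sum_{t=1}^T \|\bar{\g}_t(\challengezero) - \bar{\g}_t(\challengeone)\|_2^2 \le \sum_{t=1}^T (2C)^2 = 4C^2 T$, giving $\Delta_\prop \le 2C\sqrt{T}$. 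This is worst-case tight: equality holds when the two records induce opposing maximal gradients $\bar{\g}_t(\challengezero) = -\bar{\g}_t(\challengeone)$ of norm $C$ at every step.

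It then remains to substitute into \Cref{thm:main}. Replacing $\Delta_\prop$ by $2C\sqrt{T}$ in the first term gives $\erf\!\left(\samplerate \frac{2C\sqrt{T}}{2\sqrt{2}\sigma C}\right) = \erf\!\left(\samplerate \frac{\sqrt{T}}{\sqrt{2}\sigma}\right)$, with $C$ cancelling, which reproduces the corollary statement; the additive error $O(\sqrt{\samplerate T}/\sigma)$ is inherited unchanged. One point I would treat carefully is the direction of the inequality: since $\erf$ is monotonically increasing and the bound has the form $\beta^* \ge 1 - \erf(\cdot) - O(\cdot)$, replacing $\Delta_\prop$ by its maximum $2C\sqrt{T}$ yields the smallest (most conservative) value of $1 - \erf(\cdot)$, and therefore a valid universal lower bound on $\beta^*$ that holds for the worst-case choice of challenge points.

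The main obstacle is conceptual rather than computational, since the calculation itself is essentially a one-line triangle-inequality argument. The genuine care lies in justifying that the two worst-case secrets of the Bayes-security reduction (\Cref{thm:mia-game-reduction}) correspond to two \emph{distinct} challenge records, each contributing a gradient clipped to norm $C$ — rather than a member/non-member pair, which would yield a per-step bound of $C$ and halve the argument of $\erf$ — and in confirming that the opposing-gradient configuration is achievable, so that the factor $2$ in $\Delta_\prop = 2C\sqrt{T}$ cannot be improved in general. Once the per-step bound $2C$ is fixed, the rest is routine substitution.
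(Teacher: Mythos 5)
Your proposal is correct and follows essentially the same route as the paper's own proof: bound $\Delta_\prop \leq 2C\sqrt{T}$ via the clipping constraint and the triangle inequality applied per step under the Frobenius norm, then substitute into \Cref{thm:main} so that $C$ cancels inside the $\erf$ term. The paper's argument is a terser version of exactly this computation, and your additional remarks on the monotonicity of $\erf$ and the tightness of the opposing-gradient configuration are consistent with (and implicit in) the paper's treatment.
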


\paragraph{DP-SGD parameters selection.}

\Cref{thm:main} makes it possible to \textit{cheaply} decide on which parameters to select before running DP-SGD, given a desired level of MIA-resilience.
Suppose that an application requires $\beta^* \geq 0.98$;
assuming a uniform prior between members and non-members, this corresponds to at most a $51\%$ attack success probability.
Furthermore, suppose we wish to train for $T = 5k$ steps.
We can select the noise $\sigma$ and sampling rate $\samplerate$ based on the relation:
\begin{equation}
\label{eq:selecting-parameters}
\samplerate =  \frac{\erf^{-1}(1-\beta^*)\sqrt{2}}{\sqrt{T}}\sigma\,;
\end{equation}
in this example, $\samplerate \approx 0.00035\sigma$, which guarantees the desired level of protection.
\Cref{fig:mia-sigma-vs-samplerate} shows this in general, for $T=5k$.

\begin{figure}
	\centering
	\includegraphics[width=0.75\linewidth]{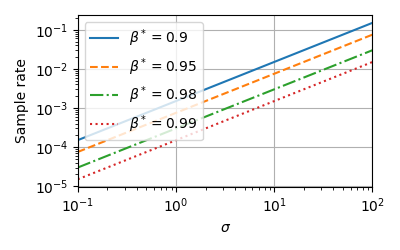}
	\caption{Bayes security against MIA: picking the noise and sampling rate to achieve a desired level of security ($T=5k$).
	}
	\label{fig:mia-sigma-vs-samplerate}
\end{figure}

\subsection{Comparison with the PLD accountant}
\label{sec:pld-comparison}

For the MIA threat model, there is a direct relation between Bayes security and $(0, \delta)$-DP:

\begin{proposition}[Bayes security and $(0, \delta)$-LDP~\citep{chatzikokolakis2020bayes}]
	Let $\mathcal{M}: \mathbb{S} \rightarrow \mathbb{O}$
	be a randomized algorithm that is also $(0, \delta)$-LDP,
	and assume $\mathbb{S} = \{0, 1\}$.
	Then:
	$$\beta^*(\mathcal{M}) = 1-\delta\, .$$
	\label{thm:relations-beta-dp}
\end{proposition}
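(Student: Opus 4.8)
The plan is to reduce $\beta^*(\mathcal{M})$ to a total variation distance and then identify that distance with $\delta$. By the total-variation characterization of Bayes security (\Cref{thm:relations-beta-tv}),
$$\beta^*(\mathcal{M}) = 1 - \max_{s_0, s_1 \in \mathbb{S}} \TotalVar(P_{\mathcal{M}(S)\mid S=s_0}, P_{\mathcal{M}(S)\mid S=s_1}).$$
Since $\mathbb{S} = \{0, 1\}$ has exactly two elements, the maximization is essentially vacuous: the only nontrivial pair is $(s_0, s_1) = (0, 1)$, while the diagonal pairs contribute distance $0$. Hence $\beta^*(\mathcal{M}) = 1 - \TotalVar(P_{\mathcal{M}(0)}, P_{\mathcal{M}(1)})$, and the whole claim reduces to showing that this total variation distance equals $\delta$.

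The next step is to connect $(0,\delta)$-LDP to total variation. Unfolding the definition of local DP at $\varepsilon = 0$: for every measurable $A \subseteq \mathbb{O}$ and both orderings of the secrets, $Pr[\mathcal{M}(0) \in A] \leq Pr[\mathcal{M}(1) \in A] + \delta$. Rearranging and taking the supremum over $A$ (together with the symmetric inequality) yields $\sup_A |Pr[\mathcal{M}(0)\in A] - Pr[\mathcal{M}(1)\in A]| \leq \delta$, i.e. $\TotalVar(P_{\mathcal{M}(0)}, P_{\mathcal{M}(1)}) \leq \delta$. Substituting into the reduced expression already gives the inequality $\beta^*(\mathcal{M}) \geq 1 - \delta$.

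For equality I would argue that the $\delta$ in the statement is the \emph{tight} privacy parameter, namely the smallest value for which $(0,\delta)$-LDP holds. The key observation is that at $\varepsilon = 0$ the hockey-stick divergence governing $(\varepsilon,\delta)$-DP collapses onto total variation: the smallest admissible $\delta$ is exactly $\sup_A (Pr[\mathcal{M}(0)\in A] - Pr[\mathcal{M}(1)\in A])$, which coincides with $\TotalVar(P_{\mathcal{M}(0)}, P_{\mathcal{M}(1)})$ because the signed supremum and the absolute supremum agree (passing to the complement $A^c$ flips the sign of the integrand). Taking $\delta = \TotalVar(P_{\mathcal{M}(0)}, P_{\mathcal{M}(1)})$ then gives $\beta^*(\mathcal{M}) = 1 - \delta$.

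The main obstacle here is conceptual rather than computational: it lies in the reading of ``$\mathcal{M}$ is $(0,\delta)$-LDP''. Taken literally, satisfying the $(0,\delta)$ bound for some $\delta$ forces only the inequality $\beta^*(\mathcal{M}) \geq 1-\delta$; the equality in the statement requires interpreting $\delta$ as the optimal (smallest) parameter, equivalently identifying it with the $\varepsilon=0$ hockey-stick divergence. Making this identification precise, and in particular justifying that the supremum defining total variation is attained by the privacy-violating event, is the crux of the argument; everything else is a direct unfolding of definitions.
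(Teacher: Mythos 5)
Your proof is correct, and there is in fact no in-paper proof to compare it against: the paper imports this proposition verbatim from \citet{chatzikokolakis2020bayes} and never proves it. Your derivation is the natural one given the paper's own machinery—reduce $\beta^*$ to total variation via \Cref{thm:relations-beta-tv}, observe that with $\mathbb{S}=\{0,1\}$ the maximization collapses to the single pair $(0,1)$, and identify the tight $(0,\delta)$ parameter with $\TotalVar(P_{\mathcal{M}(0)}, P_{\mathcal{M}(1)})$ by noting that the $\varepsilon=0$ hockey-stick divergence is exactly total variation. Your caveat is also well placed: as literally stated, ``$\mathcal{M}$ is $(0,\delta)$-LDP'' only yields $\beta^*(\mathcal{M}) \geq 1-\delta$, since any larger $\delta$ also satisfies the definition; equality requires reading $\delta$ as the minimal admissible parameter. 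This is indeed how the paper uses the proposition in \Cref{sec:pld-comparison}, where the PLD accountant's (tight, up to discretization) $\delta$ at $\varepsilon=0$ is treated as ground truth via $\beta^* \approx 1-\delta$, so your reading resolves the ambiguity in the way the authors intend.
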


Thanks to this relation, we can compare
our security bounds with equivalent ones estimated via state-of-the-art $(\varepsilon, \delta)$-DP numerical accountants.
We use the PLD accountant~\cite{fft1,fft2}, which supports the substitution adjacency relationship; this matches our MIA threat model (Game~\ref{game:mia-record-generalized}), where the attacker has to distinguish two datasets that differ on a single record.\footnote{The PLD accountant also supports the add-remove adjacency relationship, which is not relevant for our threat model.}

The goal of this comparison is twofold. First, we evaluate under what parameter choices our bounds are tight (i.e., when the approximation error derived in \Cref{thm:small-approximation-error} is small).
Second, we compare their computational costs.

\paragraph*{The tightness of our MIA bound.}
Our main result (\Cref{thm:main}) and, consequently, our bound on MIA (\Cref{thm:mia}) are based on the approximation of a mixture of Gaussians to a Gaussian distribution (\Cref{thm:small-approximation-error}).
Naturally, we do not expect this approximation to work well for all parameter choices.
Based on our initial evaluation for $T=1$ (\Cref{fig:tv-error-vs-ratio}), we suspect it will perform better for larger values of $\sigma$.

In these experiments, we compare $\beta^*$ obtained as in \Cref{thm:mia} with a $(0, \delta)$ estimate given by the PLD accountant; because PLD is tight up to discretization error, we use its estimate $\beta^* \approx 1-\delta$ as the ground truth for these experiments.
\Cref{fig:tightness} shows the absolute error between the two estimates.
We observe that our bound has a small error ($\leq 0.01$) for $\sigma \geq 1$ up to
50 epochs (i.e., $T=50k$ for $p=0.001$);
as the number of epochs grows to 100, the error increases to $\approx 0.02$.
This shows that our bound is tight for a wide range of realistic parameters.
On the other hand, we observe that for $\sigma < 1$ the error is large; intuitively this is because
the approximation of a mixture with a Gaussian gets worse the smaller the variance of the Gaussian is.
In practice, this means it is not advisable to use our bound for $\sigma < 1$.

\begin{figure}
    \centering
    \includegraphics[width=0.75\linewidth]{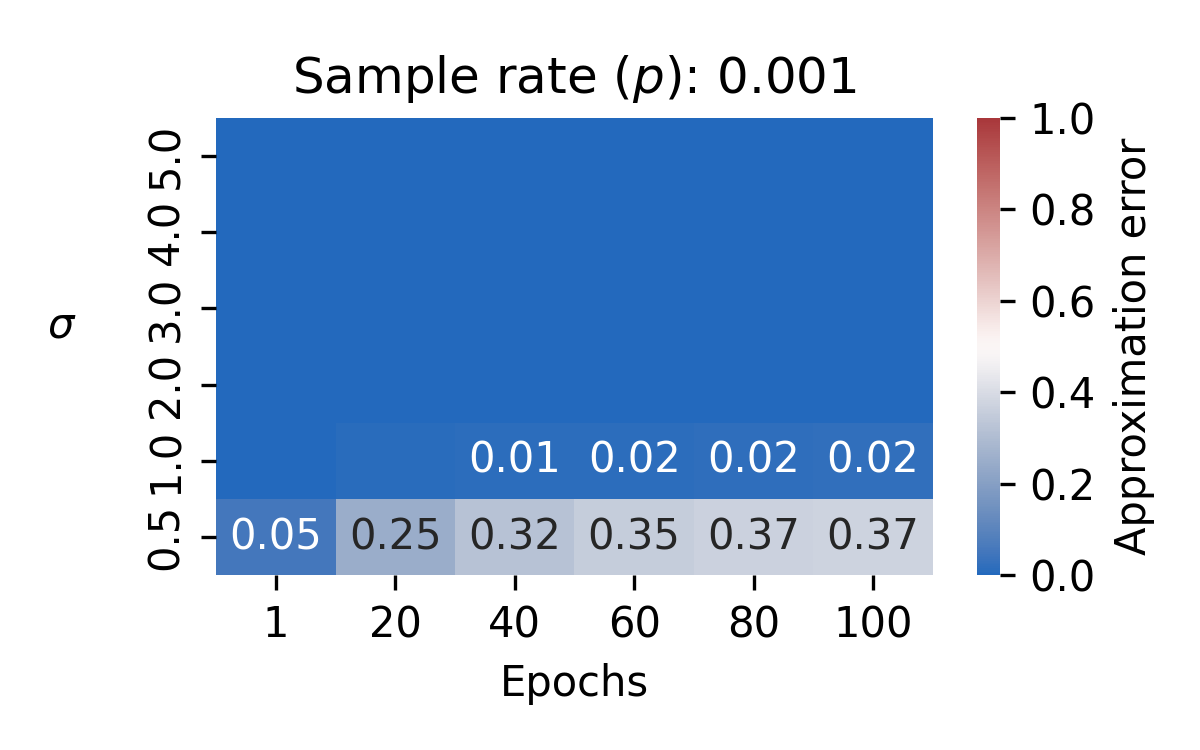}
    \caption{Approximation error between our bound (\Cref{thm:mia}) and the PLD accountant,
    w.r.t. the noise level $\sigma$ and the number of epochs ($\samplerate T$), with a sample rate $\samplerate = 0.001$.
    The error is measured as the absolute difference between the two estimates.
    A label for the error is shown only if it is $\geq 0.01$.}
    \label{fig:tightness}
\end{figure}

\paragraph*{Computational efficiency.}
We compare the costs of our bound with the PLD accountant.
\Cref{fig:time-comparison} shows that our bound is orders of magnitudes faster to compute than the respective PLD estimate.\footnote{Comparisons with other state-of-the-art accountants led to similar conclusions.}
The time efficiency of our bound enables practitioners to select the parameters for DP-SGD interactively, in real-time, and with a high level of accuracy.

\begin{figure}
    \centering
    \includegraphics[width=0.75\linewidth]{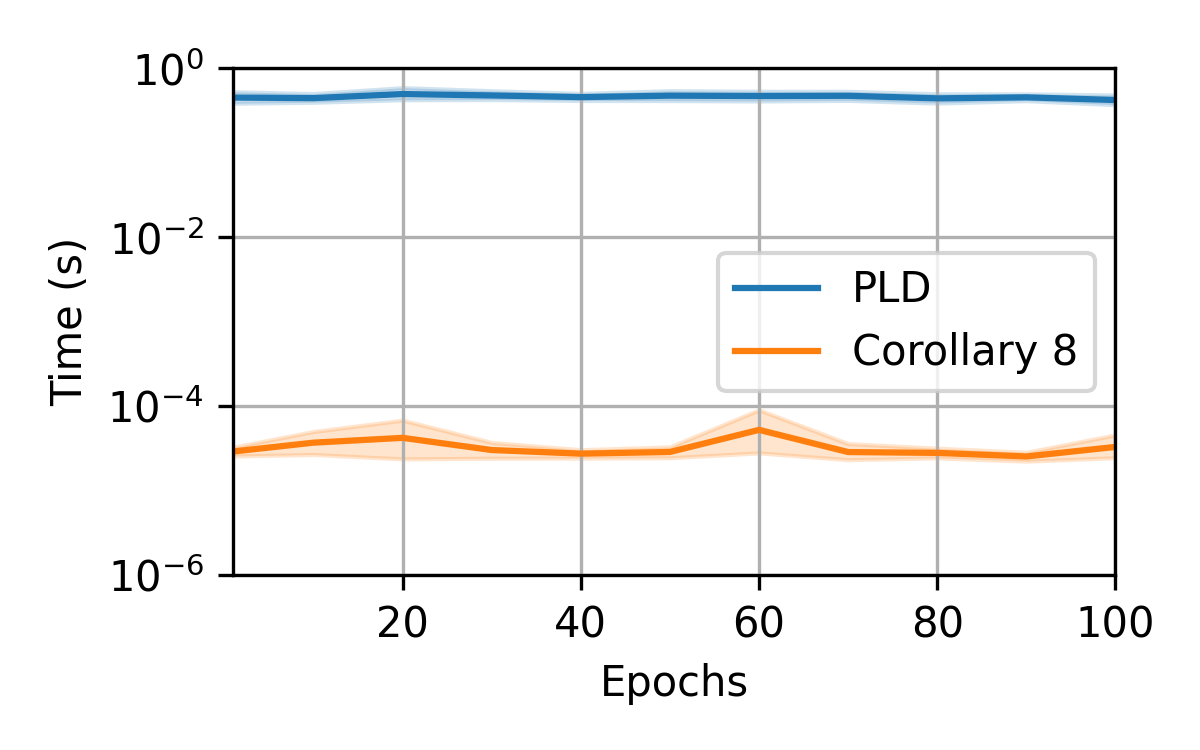}
    \caption{Computational cost of our bound (\Cref{thm:mia}) and the PLD accountant,
    w.r.t. the number of epochs, with a sample rate $\samplerate = 0.001$.
    The time is measured in seconds.}
    \label{fig:time-comparison}
\end{figure}

\subsection{Bayes Security and $(\varepsilon, \delta)$-DP}

An important drawback of $(\varepsilon, \delta)$-DP is that it may be harder to match to a specific threat model; in turn, this makes it difficult to select appropriate values $(\varepsilon, \delta)$.
In this section, we compare Bayes security with DP in terms of their ability to capture metrics of interest for MIA.
We do this analysis by relating Bayes security and $(\varepsilon, \delta)$-DP to two quantities that are commonly used for evaluating MIA threats: the advantage, and the true positive rate at a certain false positive rate (hereby denoted by TPR$@$FPR).

\paragraph*{MIA advantage.}
The MIA advantage, $\advantage$, measures how much more likely an attacker is to guess the membership of a data record having access to the trained model, compared to an attacker who only guesses based on prior knowledge.
Intuitively, $\advantage$ describes the additional risk that one incurs by releasing the model, w.r.t. the MIA threat.

The equivalence between Bayes security is  direct: $\beta^* = 1-\advantage$.
The relation between $\advantage$ and $(\varepsilon, \delta)$-DP was shown by \citet{Humphries}: if a mechanism is $(\varepsilon, \delta)$-DP, then the advantage is bounded as follows:
\begin{equation}
    \label{eq:humphries}
    \advantage \leq \frac{e^\varepsilon-1+2\delta}{e^\varepsilon+1} \,.
\end{equation}

In \Cref{fig:eps-dp-comparison}, we illustrate the behavior of this bound for progressively decreasing values of $\varepsilon$.
The curves are obtained by computing $(\varepsilon, \delta)$ via the PLD accountant for DP-SGD, and then plugging them into \Cref{eq:humphries}.

Results indicate that, by taking smaller values of $\varepsilon$, we get tighter bounds on the advantage.
In particular, the tightest bound is achieved when $\varepsilon = 0$; this corresponds to the case when $\beta^* = 1-\delta$.
This validates the use of a notion related to $(0, \delta)$-DP: under this configuration, we can hope to achieve the tightest analysis from an advantage perspective.

\begin{figure}
	\centering
	\includegraphics[width=0.75\linewidth]{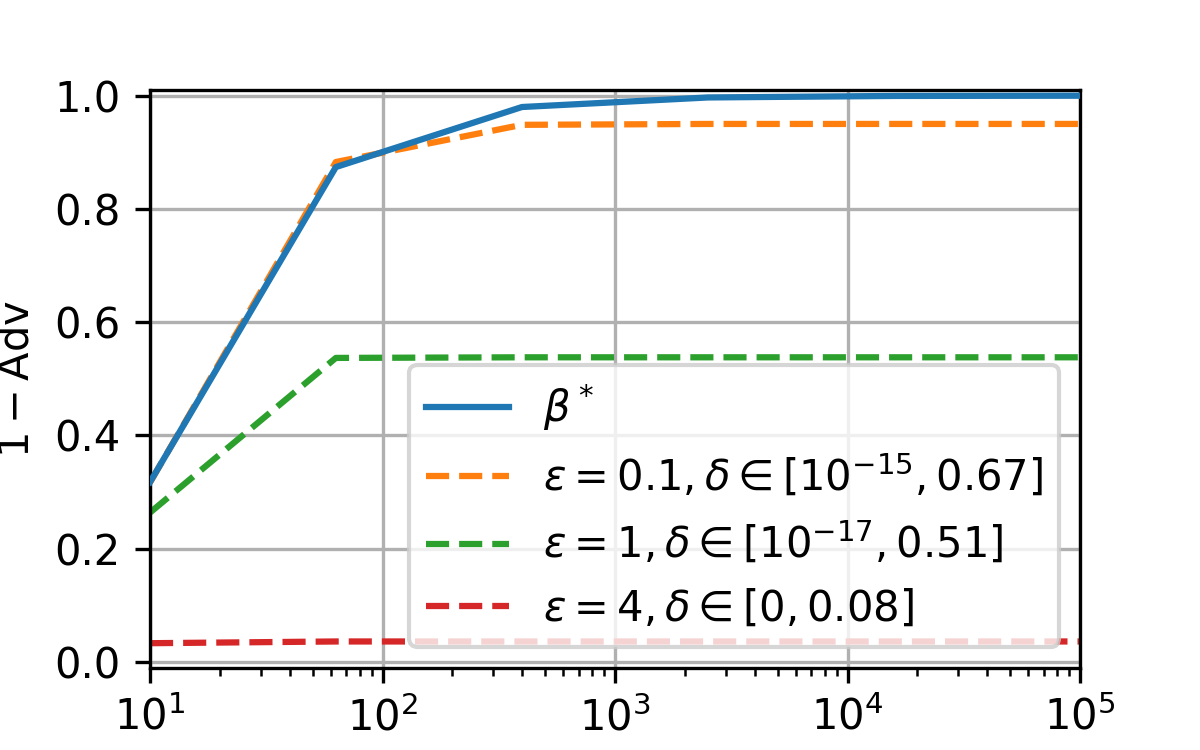}
	\caption{$(\varepsilon, \delta)$-DP of DP-SGD.
	We set $N=100k$, $L=10$, $C=1$, $T=1$. Here, $\beta^*$ is computed via \Cref{thm:mia}.}
	\label{fig:eps-dp-comparison}
\end{figure}

\paragraph*{TPR$@$FPR.}

\citet{carlini2022membership}
recommended
measuring the true positive rate of attacks at low false positive rates (\ie, TPR$@$FPR).
Their reasoning is that attacks with high accuracy may be unhelpful in practice; \eg, an attack can have $99\%$ accuracy yet not be able to identify members confidently without an unreasonable number of false positives.

To facilitate a comparison, we first prove a relation between Bayes security and TPR$@$FPR:
\begin{restatable}{proposition}{thmtprfpr}
    \label{prop:tpr-fpr}
    Consider a randomized mechanism $\mathcal{M}: \secretspace \rightarrow \observablespace$ with $\secretspace = \{0, 1\}$,
    and let $S$ be a random variable on $\secretspace$ with $\pi = \Pr[S=1]$.
    Let $s' = \adv(\pi, \mathcal{M}(S))$ be the guess that $\adv$ makes for $S$ given the output of the mechanism.
    Let the \textit{true positive} rate ($\tpr$) be the probability that attacker $\adv$ guesses correctly when $S=1$, and the \textit{false positive} rate ($\fpr$) be the probability that they guess incorrectly when $S=0$.
    If the mechanism is $\beta^*$-secure then for every attacker:
    \begin{align*}
        &\tpr \leq 1 + \fpr - \beta^* \quad \text{if } \pi \leq \nicefrac{1}{2}\\
        &\tpr \leq \frac{\pi}{1-\pi} \left(1 + \fpr - \beta^*\right) \quad \text{otherwise.}
    \end{align*}
    We have equality for a uniform prior, $\pi = \nicefrac{1}{2}$.
\end{restatable}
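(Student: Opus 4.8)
The plan is to convert the Bayes-security hypothesis into a statement about total variation and then read off what it forces on the attacker's two per-class error rates. Because $\secretspace=\{0,1\}$, \Cref{thm:relations-beta-tv} specializes to $\beta^*=1-\TotalVar(P_0,P_1)$, where $P_s=P_{\M(S)\mid S=s}$, so the assumption that $\M$ is $\beta^*$-secure is exactly $\TotalVar(P_0,P_1)\le 1-\beta^*$. Any attacker is described by the (possibly randomized) event $R=\{s'=1\}$ on which it outputs the guess~$1$, and by definition $\tpr=P_1(R)$ and $\fpr=P_0(R)$. The core inequality is then immediate, since $\tpr-\fpr=P_1(R)-P_0(R)\le\TotalVar(P_0,P_1)\le 1-\beta^*$; rearranging gives $\tpr\le 1+\fpr-\beta^*$, which is the claimed bound for $\pi\le\nicefrac{1}{2}$ and is in fact prior-independent.

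To explain the prior-dependent form and the split into two cases, I would route the same inequality through the generalized advantage. Writing the informed success probability as $\pi\,\tpr+(1-\pi)(1-\fpr)$ and the prior-only baseline as the MAP success $\max(\pi,1-\pi)$, the normalizing denominator $1-\max(\pi,1-\pi)$ of $\advantage_\pi$ equals $\pi$ when $\pi\le\nicefrac{1}{2}$ and $1-\pi$ when $\pi>\nicefrac{1}{2}$, which is exactly where the two regimes come from. Using the definition of Bayes security (\Cref{eq:beta}), so that $\advantage_\pi\le\max_{\pi'}\advantage_{\pi'}=1-\beta^*$ for every prior, and substituting the total-variation bound, the asymmetric normalization in the regime $\pi>\nicefrac{1}{2}$ contributes the prior-odds factor $\nicefrac{\pi}{1-\pi}\ge 1$ and produces the stated inequality $\tpr\le\frac{\pi}{1-\pi}\left(1+\fpr-\beta^*\right)$. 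Its validity is in any case immediate from the prior-independent bound above, since $1+\fpr-\beta^*=\TotalVar(P_0,P_1)+\fpr\ge 0$ and the factor is at least one.

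For the equality claim at the uniform prior I would exhibit the optimal Neyman--Pearson attacker. Taking $R^\star=\{o:p_1(o)>p_0(o)\}$, where $p_s$ denotes the density of $P_s$ with respect to a common dominating measure, gives $P_1(R^\star)-P_0(R^\star)=\int_{R^\star}(p_1-p_0)=\TotalVar(P_0,P_1)$. Hence this attacker attains $\tpr-\fpr=1-\beta^*$, \ie, equality in the bound for $\pi\le\nicefrac{1}{2}$, which coincides with the $\pi>\nicefrac{1}{2}$ bound at $\pi=\nicefrac{1}{2}$; this simultaneously shows the bound is tight and that the uniform prior is the worst case, consistent with \Cref{thm:mia-game-reduction}.

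I expect the main obstacle to be the prior-dependent step rather than the core inequality. The delicate part is tracking the $\max(\pi,1-\pi)$ normalization so that the prior-odds factor lands on the correct side of the inequality, and being explicit that the resulting bound for $\pi>\nicefrac{1}{2}$, although weaker than the prior-independent $\tpr\le 1+\fpr-\beta^*$, is the natural consequence one obtains when the guarantee is phrased in terms of the advantage at a skewed prior. A secondary point requiring care is the reduction to a single decision region: one should argue that randomizing the attacker, or equivalently moving inside the convex hull of the achievable $(\fpr,\tpr)$ pairs, cannot exceed the total-variation gap, so that restricting to deterministic tests is without loss.
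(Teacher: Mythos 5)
Your proof is correct, but it takes a genuinely different route from the paper's. The paper argues via the error-ratio characterization of Bayes security from \citet{chatzikokolakis2020bayes}, namely $\beta^* \leq \Pr[\adv(\pi,\mathcal{M}(S)) \neq S]\,/\,\Pr[\adv(\pi)\neq S]$ for every prior, then rewrites the two error probabilities through expected false-positive/false-negative counts and case-splits on $\min(\positives,\negatives)$. You instead use the total-variation characterization (\Cref{thm:relations-beta-tv}), which for $\secretspace=\{0,1\}$ gives $1-\beta^* = \TotalVar(P_0,P_1)$ with $P_s = P_{\mathcal{M}(S)\mid S=s}$, together with the Neyman--Pearson observation that any (possibly randomized) decision rule satisfies $\tpr - \fpr \leq \TotalVar(P_0,P_1)$; the $\pi > \nicefrac{1}{2}$ case then follows a fortiori because $\nicefrac{\pi}{(1-\pi)} \geq 1$ and $1+\fpr-\beta^*\geq 0$. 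Your route buys three things: (i) the bound $\tpr \leq 1 + \fpr - \beta^*$ is shown to hold for \emph{every} prior, so your result is strictly stronger than the stated proposition (whose second case becomes a weaker, redundant corollary); (ii) randomized attackers are handled explicitly via convexity of the achievable $(\fpr,\tpr)$ pairs; (iii) the equality claim at $\pi=\nicefrac{1}{2}$ is constructive, via the likelihood-ratio region $R^\star$. It is worth flagging that your derivation is also more robust than the paper's own: the paper's step $\fp+\fn = \positives\left(\fpr + (1-\tpr)\right)$ implicitly uses $\fp = \positives\cdot\fpr$ rather than $\fp = \negatives\cdot\fpr$, an identification valid only when $\positives=\negatives$; carried out with the correct coefficient, that route yields only the weaker $\tpr \leq 1-\beta^* + \frac{1-\pi}{\pi}\fpr$ in the regime $\pi\leq\nicefrac{1}{2}$, whereas your total-variation argument delivers the stated bound directly. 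One caveat on your second paragraph: routing through the generalized advantage at a skewed prior does not literally reproduce the form $\frac{\pi}{1-\pi}\left(1+\fpr-\beta^*\right)$ (it gives $\tpr \leq 1 + \frac{1-\pi}{\pi}\left(\fpr-\beta^*\right)$ instead), but this is immaterial since, as you note, the a fortiori argument already carries that case.
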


We observe that the special case $\pi = \nicefrac{1}{2}$ was known (e.g., \cite{Yeom:2020}).
However, we believe the general case is novel.
In practice, we expect the case $\pi \leq \nicefrac{1}{2}$ to be more relevant; for example, in MIA, the prior probability that a data record is a member is typically smaller than the alternative case.

We compare this with the bound given by $f$-DP, which gives the best possible bound on TPR$@$FPR;
we remark that obtaining $f$-DP bounds is computationally expensive.
In \Cref{fig:tpr-fpr-comparison}, each method gives an upper bound on the TPR for a chosen FPR value.
We observe that Bayes security is optimal for TPR$@$FPR=0.5; this matches the case when $\beta^*$ is the complement of the advantage.
The bound given by Bayes security becomes worse for lower levels of FPR. However, we observe a relatively small discrepancy with respect to the $f$-DP bound.
For example Bayes security bounds TPR$@$0.1FPR by $0.128$, while $f$-DP bounds it by $0.113$;
for a smaller FPR, Bayes security indicates TPR$@$0.01FPR $\leq 0.038$,
while the $f$-DP bound is $0.012$.

These experiments suggest that one can use the (cheap to compute) Bayes security to obtain a good bound on TPR$@$FPR, and then use $f$-DP to tighten the bound if needed.

\begin{figure}
    \centering
    \includegraphics[width=0.55\linewidth]{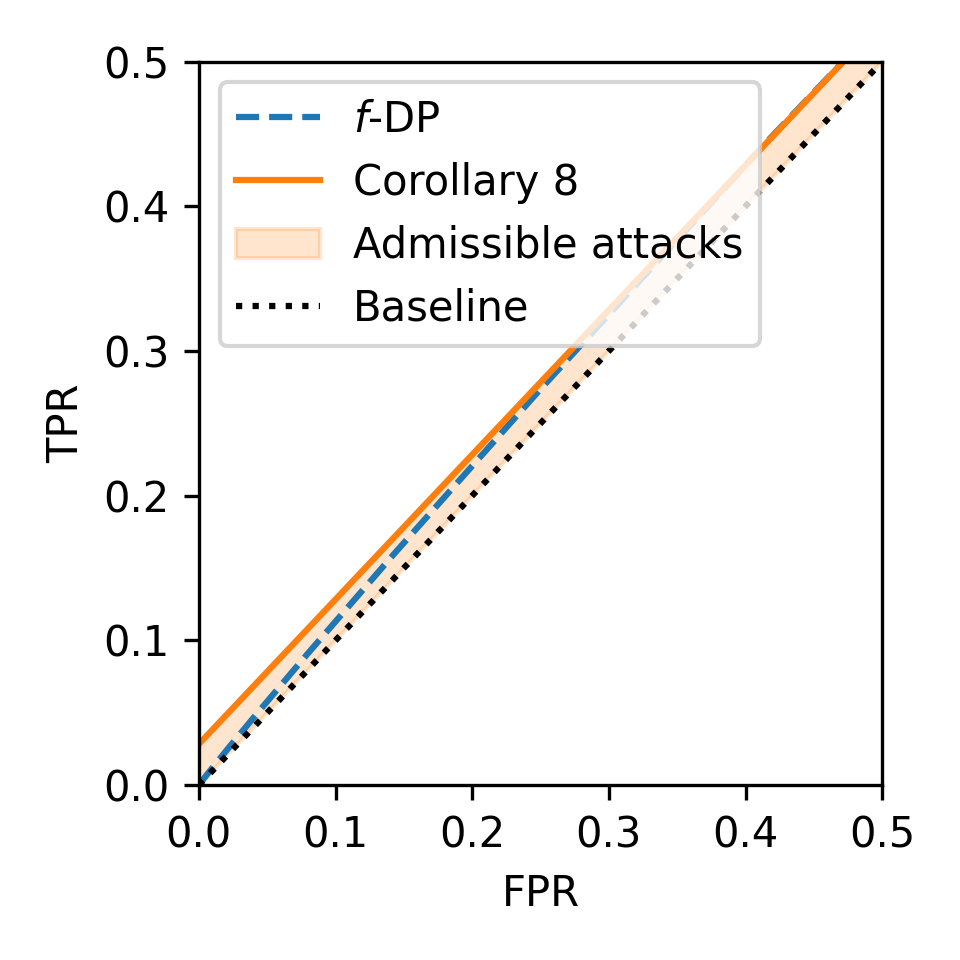}
    \caption{
        Bounds on TPR$@$FPR for DP-SGD, computed via Bayes security (\Cref{prop:tpr-fpr}) and $f$-DP.
        The viable region according to $\beta^*$ is highlighted in orange.
        DP-SGD parameters: $\samplerate = 0.0001$, $N=10k$, $50$ epochs (i.e., $T=\nicefrac{50}{p}$), $\sigma=2$.
    }
    \label{fig:tpr-fpr-comparison}
\end{figure}

\subsubsection{\Cref{thm:mia} as an $(\varepsilon, \delta)$-DP Estimator}
We observe that, in addition to having a direct expression for the security of DP-SGD against MIA, one could use \Cref{thm:mia} to obtain a rough estimate of $(\varepsilon, \delta)$-DP.
To this end, we can once again exploit \Cref{eq:humphries} by \citet{Humphries}.

By the correspondence between the advantage and Bayes security ($\advantage = 1-\beta^*$), we obtain a bound on $\varepsilon$ as follows.
Let $\beta^*$ be the Bayes security of DP-SGD computed as per \Cref{thm:mia};
then for any choice of $\delta \in [0, 1)$, we get:
$$\varepsilon \geq \log -\frac{2\delta + \beta^* - 2}{\beta^*} \,.$$

This bound can be a cheap alternative to more computationally expensive methods (\eg numerical accountants) for estimating $\varepsilon$.
However, it should be remarked that this bound is loose. The inequality by \citet{Humphries}, whilst tight, applies to \textit{any} $(\varepsilon, \delta)$-DP algorithm: one might improve on their inequality and the above estimate for the case of DP-SGD; this is what more advanced $(\varepsilon, \delta)$-DP estimators do.

\paragraph*{Related work.}
The $(\varepsilon, \delta)$-DP literature has explored the privacy guarantees of many basic mechanisms.
In particular, \citet{balle2018improving} and \citet{sommer2018privacy}
studied the privacy of a \textit{Gaussian mechanism without subsampling}.
By applying our observation that the distribution of the gradients can be approximated with a Gaussian, the special case $\samplerate=1$ of \Cref{thm:mia} can be obtained as a consequence of their results, thanks to the relationship between Bayes security and $(\varepsilon, \delta)$-DP.
Further, the effect of subsampling in the context of DP is understood to amplify the privacy level by a factor of $\samplerate$~\cite{Abadi:2016,wang2019subsampled,mironov2019r}.
In concurrent work, \citet{mahloujifar2022optimal} suggested the following strategy: for a specific threat (membership inference, in their case), determine the advantage of an attacker who observes the intermediate models output by DP-SGD.
Their proposal is to estimate this advantage via Monte Carlo simulations.
Our analysis strategy is similar to theirs in spirit: we aim to quantify the leakage for specific threats. Differently from them, we tackle a more general case (which subsumes membership and attribute inference), and we obtain closed-form expressions for our bounds.

\section{Attribute Inference}
\label{sec:attribute-inference}
In this section, we apply \Cref{thm:main}
to measure the security of DP-SGD against AI.
First, we discuss the limits of \textit{any} security analysis: without making assumptions, one cannot improve on MIA bounds (\eg, \Cref{thm:mia}).
We mitigate this issue by providing data-dependent bounds for AI, and by instrumenting DP-SGD to compute them.
Second, we study whether a data-dependent security analysis has any security implications.
Finally, we discuss the computational time overheads of our method and ways to improve it.

\subsection{Limits of \textit{any} AI Analysis}
Before stating our AI bound, it is important to understand what is achievable by a DP-SGD security analysis under this threat.
As it turns out, it is impossible to obtain a non-trivial bound for AI (\ie a bound for AI that is better than the MIA bound)
without making assumptions on the \textit{gradient function}.

This is easy to see by the following example.
In AI, the attacker tries to guess the secret value $S$ given partial information $\varphi$ and the model's weights.
By the arguments made in \Cref{sec:just-need-intermediate-gradients},
we can simplify this and limit the information available to an attacker to $\varphi$ and the clipped and noisy gradient: $\bar{\g}_t(\challenge) + \mathrm{Noise}$;
further, by the arguments made in \Cref{sec:mia-two-challenge-points}, we can assume there are just two secrets (\ie $S \in \{0, 1\}$).
Let us now consider a contrived gradient function, which returns $\bar{\g}_t(\concat{\varphi}{s_0}) = -\bar{\g}_t(\concat{\varphi}{s_1}) = (C, 0, ..., 0)$ for every $t$, where $C$ is the clipping gradient.
It is easy to see that this case matches record-level MIA, and that the bound on the Bayes security against AI will be \Cref{thm:mia}, which cannot be improved upon without further assumptions.
Even then, it is unclear what reasonable assumptions one could make on $\bar{\g}_t$ without affecting the validity of a security analysis.

We address this problem by instrumenting DP-SGD to keep track of the sensitivity $\|\bar{\g}_t(\concat{\varphi}{s_i}) - \bar{\g}_t(\concat{\varphi}{s_j})\|$,
for all training points $z$ and all possible attribute values $s_i, s_j$.
This comes with an extra computational cost, which is although acceptable for various real-world tasks.
In the next part of this section, we derive the bound on Bayes security, describe an algorithm for measuring the bound, and then describe optimization and approximation strategies.

\subsection{Bayes Security of DP-SGD against AI}

We can now adapt our main result (\Cref{thm:mia}) to the case of AI;
as before, we do this by specializing the definition of $\Delta_\prop$.
We write $\varphi(\challenge)$ to denote the non-sensitive part of $\challenge$.

\begin{restatable}{corollary}{corollaryAI}
	\label{thm:ai}
	The Bayes security of DP-SGD against AI is:
	$$\beta^*(P_{O \mid S}) \geq 1-\erf\left(\samplerate\frac{\|R\|}{2\sqrt{2}\sigma C}\right) - O\left(\frac{\sqrt{\samplerate T}}{\sigma}\right) \,,$$
	where $R = (R_1, ..., R_T)$ with
	$$R_t = \max_{\challenge \in L_t} \max_{s_0, s_1 \in \aiattributespace} \|\bar{\g}_t((\varphi(\challenge), s_0))-\bar{\g}_t((\varphi(\challenge), s_1))\| \,,$$
	where $L_t$ is the batch sampled at step $t$.
\end{restatable}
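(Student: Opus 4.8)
The plan is to obtain \Cref{thm:ai} as a direct specialization of \Cref{thm:main}, so the only substantive work is to bound the quantity $\Delta_\prop$ of \Cref{eq:delta-prop} from above by $\|R\|$. First I would instantiate the property map for the AI game: since $\challenge = \concat{\varphi}{s}$ and $\prop(\challenge) = s$, the unique preimage is $\prop^{-1}(s) = \concat{\varphi}{s}$, so, viewing the clipped-gradient sequence $\bar{\g}(\cdot) = (\bar{\g}_1(\cdot), \dots, \bar{\g}_T(\cdot))$ as a matrix whose Frobenius norm is $\sqrt{\sum_t \|\cdot\|^2}$, \Cref{eq:delta-prop} becomes
\[
\Delta_\prop = \max_{s_0, s_1 \in \aiattributespace} \sqrt{\sum_{t=1}^T \big\| \bar{\g}_t(\concat{\varphi}{s_0}) - \bar{\g}_t(\concat{\varphi}{s_1}) \big\|^2}\,.
\]

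The key algebraic step is then to move the maximization inside the sum. Because the square root is monotone and $\max_{s_0,s_1} \sum_t a_t(s_0,s_1) \le \sum_t \max_{s_0,s_1} a_t(s_0,s_1)$ for nonnegative terms, I obtain
\[
\Delta_\prop \le \sqrt{\sum_{t=1}^T \max_{s_0, s_1 \in \aiattributespace} \big\| \bar{\g}_t(\concat{\varphi}{s_0}) - \bar{\g}_t(\concat{\varphi}{s_1}) \big\|^2}\,.
\]
Each per-step term is dominated by $R_t^2$, since $R_t$ is defined as the maximum of exactly this gradient difference taken over all attribute pairs and over every point $\challenge'$ in the batch $L_t$; choosing $\challenge'$ with $\varphi(\challenge') = \varphi$ recovers the challenge term. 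Hence $\Delta_\prop \le \sqrt{\sum_t R_t^2} = \|R\|$.

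To close the argument I would invoke monotonicity of $\erf$: because $\erf$ is increasing and $\Delta_\prop \le \|R\|$, we have $1 - \erf\!\left(\samplerate\, \Delta_\prop / (2\sqrt{2}\sigma C)\right) \ge 1 - \erf\!\left(\samplerate\, \|R\| / (2\sqrt{2}\sigma C)\right)$. Substituting the larger $\|R\|$ into the bound of \Cref{thm:main} therefore weakens it in the correct direction and yields the stated lower bound on $\beta^*(P_{O\mid S})$, with the $O(\sqrt{\samplerate T}/\sigma)$ approximation term carried over unchanged.

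The main obstacle is not the algebra but justifying that the single, data-dependent quantity $R_t$ is a sound upper bound on the per-step gradient difference appearing in $\Delta_\prop$. The clipped gradients $\bar{\g}_t$ depend on the current weights $\btheta_t$, whose trajectory differs between the $s_0$- and $s_1$-worlds and, through subsampling, on whether the challenge is drawn at step $t$; taking the maximum over all points actually present in the batch $L_t$ (rather than over the challenge on a single fixed trajectory) is precisely the device that makes $R_t$ a worst-case sensitivity that can be accumulated by instrumenting DP-SGD during one run. I would make this worst-casing explicit to argue soundness, and note as a consistency check that each clipped gradient has norm at most $C$, so $R_t \le 2C$ and $\|R\| \le 2C\sqrt{T}$, which is exactly the value of $\Delta_\prop$ used in the MIA specialization; this recovers the fact that the AI bound of \Cref{thm:ai} is never weaker than the MIA bound of \Cref{thm:mia}.
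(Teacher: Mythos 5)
Your proof is correct, and it follows the same overall strategy as the paper's: instantiate $\prop^{-1}(s) = \concat{\varphi}{s}$ for Game~\ref{game:ai}, bound $\Delta_\prop$ from \Cref{eq:delta-prop} by per-step worst-case gradient differences, and feed the result into \Cref{thm:main} via monotonicity of $\erf$. The one substantive difference is \emph{where} you interchange the maximum with the sum over steps, and it is a difference in your favor. You interchange at the level of squared norms, before taking the square root, which gives $\Delta_\prop \le \sqrt{\sum_{t=1}^T R_t^2} = \|R\|_2$, i.e., exactly the Euclidean norm appearing in the statement. The paper's own proof instead relaxes via $\sqrt{\sum_t a_t^2} \le \sum_t a_t$ and stops at $\Delta_\prop \le \sum_{t=1}^T R_t = \|R\|_1$; since $\erf$ is increasing and $\|R\|_2 \le \|R\|_1$, that chain as written only supports the corollary with $\|R\|_1$ in place of $\|R\|$. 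So your route is not merely equivalent: it is the tighter argument, and it is the one that actually delivers the Euclidean-norm bound stated. Your consistency check ($R_t \le 2C$, hence $\|R\| \le 2C\sqrt{T}$, recovering \Cref{thm:mia}) is also correct.

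The one step you flag but leave informal---why it is sound to maximize only over points of the realized batch $L_t$---is resolved in the paper by a one-line observation you should adopt: if the challenge point is not sampled into the batch at step $t$, it cannot influence the model weights or gradients at that step, so such steps contribute nothing to the distance between the $s_0$- and $s_1$-worlds. Your phrase ``choosing $\challenge'$ with $\varphi(\challenge') = \varphi$ recovers the challenge term'' implicitly assumes the challenge is in $L_t$; the observation above is what covers the remaining steps and closes this gap.
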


\begin{algorithm}[htb!]
	\caption{$\textrm{AI-resilient-SGD}(\{z_1,\ldots,z_{N}\}, \calL(\btheta)=\frac{1}{N}\sum_i \calL(\btheta, z_i), \eta_t, \sigma, L, C, \aiattributespace)$}
	\label{alg:ai-resilient-dpsgd}
	\DontPrintSemicolon
	{\bf Initialize} $\btheta_0$ randomly\;
	\For{$t \in [T]$}{
		Take a random sample $L_t$ with sampling probability $L/N$\;
		{\bf Compute gradient}\;
		{For each $i\in L_t$, compute $\g_t(z_i) \gets \nabla_{\btheta_t} \calL(\btheta_t, z_i)$}\;
		{\bf Compute gradient bound w.r.t. attribute's value}\;
		{\newcode{$R_t = \max_{\challenge \in L_t} \max_{s_0, s_1 \in \aiattributespace} \|\bar{\g}_t((\varphi(\challenge), s_0))-\bar{\g}_t((\varphi(\challenge), s_1))\|$}}\;
		{\bf Clip gradient}\;
		{$\bar{\g}_t(z_i) \gets \g_t(z_i) / \max\big(1, \frac{\|\g_t(z_i)\|_2}{C}\big)$}\;
		{\bf Add noise}\;
		{$\tilde{\g}_t \gets \frac{1}{L}\left( \sum_i \bar{\g}_t(z_i) + \mathcal{N}(0, \sigma^2 C^2 \Id)\right)$}\;
		{\bf Descent}\;
		{ $\btheta_{t+1} \gets \btheta_{t} - \eta_t \tilde{\g}_t$}\;
	}
	{\bf Output} $\btheta_T$ %
\end{algorithm}

We describe how DP-SGD can be adapted to compute the values $R_t$.
We observe that bounds computed in this manner are \textit{data-dependent}: the value of $R_t$ at step $t$
depends on the model's parameters at that step, and on the data itself.
In the next part of this section, we discuss why this has no privacy implications for the attack under consideration (AI).

\Cref{alg:ai-resilient-dpsgd} modifies DP-SGD for calculating $R_t$.
For every batch $L_t$ and every point $\challenge \in L_t$,
the algorithm augments $\varphi(\challenge)$ with all completions $s \in \aiattributespace$, and determines the maximum distance between the two.
The Bayes security is determined by plugging the vector $R = (R_1, ..., R_T)$ in \Cref{thm:ai}.

\subsection{Privacy Implications of Data-dependence}
One may wonder whether computing a security metric that depends on the data may have any privacy implications;
indeed, $\beta^*$, computed as per \Cref{alg:ai-resilient-dpsgd}, contains information about the secret.
The main concern raises when revealing $\beta^*$ to a malicious party: would they be
able to infer any privacy information about the training set if given access to it?

We analyze this concern w.r.t. to two threat models: MIA and AI.
We can describe each case similarly to Games~\ref{game:mia-record-generalized}-\ref{game:ai},
with the difference that the output $O$ communicated to the attacker is the security metric $\beta^*$.
Equipped with this information, the attacker's goal is to guess the secret $S$ (i.e., \textit{membership} of a challenge point
or \textit{attribute} value).

\paragraph{MIA.}
Revealing the Bayes security $\beta^*$ of DP-SGD against AI, computed as described in \Cref{alg:ai-resilient-dpsgd},
\textit{may leak} the membership of a data record.
An example that supports this claim follows.
Suppose the challenge points, $\challengezero$ and $\challengeone$, are both such that
$R_t = \max_{s_0, s_1 \in \aiattributespace} \|\bar{\g}_t((\varphi(\challengeb), s_0))-\bar{\g}_t((\varphi(\challengeb), s_1))\|$;
the parameter $R_t^b$, computed for either challenge point $\challengeb$, is maximized by that challenge point.
Further, suppose $R_t^0 \neq R_t^1$.
In this special case, the attacker can infer the challenge point from $\beta^*$.
Based on this observation, we recommend that whenever both MIA and AI are a concern, the security parameter estimated for the AI analysis is not revealed to the public.

\paragraph{AI.}
If the main concern is AI, no information is revealed by $\beta^*$ itself.
The reason for this is that $R_t$ is computed based on all possible values $\mathcal{A}$ for the attribute.
Therefore, $R_t$ (hence, $\beta^*$) is the same regardless of the value of the sensitive attribute $s$.
Therefore, if AI is the only threat of concern for a deployment, it is safe to reveal the estimated security metric to the public.

\subsection{Computational Costs and Optimization}

The cost of Algorithm~\ref{alg:ai-resilient-dpsgd} grows quadratically in the number of attributes $|\aiattributespace|$.
In our experiments (\Cref{sec:experiments}), we observe that the time overhead is acceptable for small attribute spaces.
Nevertheless, as $\aiattributespace$ grows, this cost becomes too high.
We explore two strategies for reducing this cost.

\paragraph*{Domain knowledge.}
As a first strategy, we can use the fact that Bayes security is maximized over two secret values only (\Cref{thm:mia-game-reduction}); in particular, these should be the two values $s_0, s_1 \in \aiattributespace$ that maximize the attacker's advantage.
In many practical applications, we can exploit domain knowledge to decide in advance what values will likely give the attacker the best advantage.
For example, consider the MNIST dataset, where each data record is a pixel matrix, and where each pixel is represented by a value in $[0, 1]$.
Suppose the sensitive attribute is one of such pixels.\footnote{We could equivalently define the risk for a set of pixels at once. A similar argument would apply.}
In this case, we can make the assumption that the two values maximizing the risk for the attribute will be the two extremes, $\{0, 1\}$.
This observation can substantially reduce the computation cost of Algorithm~\ref{alg:ai-resilient-dpsgd}.

\paragraph*{The point set diameter problem.}
A second strategy is to approximate the value $R_t$.
To this end, we observe that finding the distance between the two maximally distant gradients is an instance
of the well-known point set diameter problem, which is defined as follows.
Let $(M, d)$ be a metric space on a finite set $M$ for some metric $d$.
A solution to the point set diameter problem is an algorithm that returns $\mathrm{diam}_M = \max_{x, y \in M} d(x, y)$.
Various exact and approximate solutions exist for this problem~\cite{yao1982constructing,imanparast2019efficient,chatzikokolakis2020bayes}.
In this paper, we consider a simple $O(N)$ solution, where $N = |M|$, which gives a lower bound based on the triangle inequality:
for any choice $x \in M$, we have $\mathrm{diam}_M \leq 2\max_{y \in M} d(x, y)$.

Let $v$ be the mean vector of the gradients $\{\bar{\g}_t((\varphi(\challenge), a))\}_{a \in \mathcal{A}}$; we estimate $R_t$ as:
$$R_t \leq 2 \max_{a \in \mathcal{A}}\|\bar{\g}_t((\varphi(\challenge), a)) - v\| \,.$$
Naturally, the choice of a \textit{lower} bound here is security-motivated: it measures the worst-case for the victim.

Note that this estimate can be improved either by picking more carefully the point $v$, or by running this algorithm for various choices of $v$ and then choosing the one giving the tightest bound.
Despite the approximation given by the triangle inequality, in our experiments we
observed this approximation to be good enough.
Nevertheless, practical applications may consider solutions that give tighter bounds (\eg, \cite{imanparast2019efficient}).

\paragraph*{Related and Future work.}
With an appropriate choice of adjacency relationship, one can capture the AI threat in DP. One may wonder whether this observation enables adapting accountant-based analyses of DP-SGD to this threat.
We observe that DP-SGD uses gradient-clipping to bound sensitivity. Since gradients are computed per-example, there is no stronger data-independent sensitivity bound for AI than MIA when adapting neighboring datasets as differing in one attribute in one record~\cite{kifer2011no}. Unfortunately, attribute-DP mechanisms~\cite{zhang2022attribute} are impractical for ML.

Opportunities for future work include further improvements to the computational efficiency of our AI bounds analysis.
When using the approximate algorithm, the bottleneck of the analysis becomes computing the gradients of data records obtained by replacing their sensitive attribute.
A promising strategy is to use influence functions (IF)~\cite{cook1980characterizations,koh2017understanding} to approximate this more efficiently.
The main idea behind IF is to \emph{efficiently} approximate the addition and removal of a training point to a trained model via a Taylor approximation of a Newton step.
We observe that future work may explore further strategies.
In addition to using alternative solutions to the point set diameter problem, one could use optimization algorithms such as gradient descent to obtain the value of $R_t$ more quickly.
Future work may also explore approximations of the $R_t$ expression, \eg by using a Newton approximation by taking inspiration from the influence functions literature.

\section{Empirical evaluation}
\label{sec:experiments}
We evaluate our security analyses on models trained via DP-SGD on two datasets.
First, we study the computational costs of the AI analysis, and the effectiveness of its approximation (\Cref{sec:attribute-inference}).
Second, we compare the privacy-utility trade-offs that our MIA and AI analyses can offer.

\paragraph*{Datasets.}
We use two tabular datasets; this makes it meaningful to conduct an attribute inference analysis.
They are the Adult Census Income dataset (\adult) and the Purchase dataset (\purchase).
The \adult dataset has 32,561 records with 108 attributes each (after one-hot-encoding).
It contains data from the 1994 US Income Census, and the learning task is to predict the income of a person (precisely, whether it is above 50K/year or not), given attributes such as age and education.
Importantly, it has attributes taking more than 2 possible values; this facilitates a comparison between the ``full'' and ``approximate'' AI analyses.
We select \textit{age} to be the sensitive attribute for the AI analysis:
this attribute has 73 unique values, ranging between 17 and 90.
For the purpose of the AI analysis, we consider the entire range $\{17, 18, ..., 90\}$.

The \purchase dataset has 197,324 records and 600 attributes. Each record correspond to one customer, and each (binary) attribute indicates whether the customer bought a particular item.
This dataset enables evaluating how well our analyses scale to larger datasets.
For the AI analysis, we select the first attribute (purchase) to be the sensitive one.

\paragraph*{Models and setup.}
We train two fully connected neural networks as described by \citet{bao2022importance}, implemented via \texttt{pytorch}.
We instrumented \texttt{Opacus}~\cite{opacus} to support our AI analysis, as a callback function that is run at every step.

\paragraph{Privacy parameters.}
We use \Cref{eq:selecting-parameters} for selecting the privacy parameters $\samplerate$ and $\sigma$.
For illustration purposes, we aim at a MIA Bayes security $\beta^* = 0.9$ after $20$ epochs; this sub-optimal security against MIA enables observing the benefits of the AI analysis.
We run DP-SGD for $30$ epochs; this enables observing the behavior of $\beta^*$ after the predicted number of epochs.
We let $L=256$ for the \adult dataset, and $512$ for the \purchase dataset.
This, paired with the training set size, enables determining the following noise parameters (\Cref{eq:selecting-parameters}): $\sigma = 3.51$ (\adult) and $\sigma = 1.8$ (\purchase).

\subsection{AI analysis}

\paragraph{Running Time.}
First, we evaluate the computational costs of our analysis.
We present the measurements for the \adult dataset, which is harder to tackle from an AI analysis perspective; indeed, for each step we need to compute $N \times |\aiattributespace|$ gradients, where $N$ is the size of the training set, and $\aiattributespace$ are the possible values for the sensitive attribute; further, for every step we need to solve the point set diameter problem for the set of generated gradients, which is expensive (\Cref{sec:attribute-inference}).

We train the model enabling one of the following analyses: DP accountant, MIA, AI (approximate), AI (full).
As a baseline, we include the training time for the same model without DP.
We run this for 10 epochs (26 steps per epoch).
\Cref{fig:dpsgd-time-comparison} shows the average time taken to train for one epoch.

The cost of training with DP is roughly twice the cost of training without.
We can see that running our MIA analysis gives a rather marginal improvement over a DP accountant; indeed, we expect that the computational advantages of our analysis (\Cref{sec:mia}) would prove more useful during parameter search than during training.
Unsurprisingly, we can see that both AI analyses have large costs relatively to the others.
In \Cref{sec:goodness-analyses}, we observe that this may be a fair price to pay given their advantages from a privacy-utility perspective.

Finally, the approximate analysis reduces the costs by one order of magnitude. In the next part, we measure how much this approximation affects the quality of the analysis.

\begin{figure}
	\centering
	\includegraphics[width=0.75\linewidth]{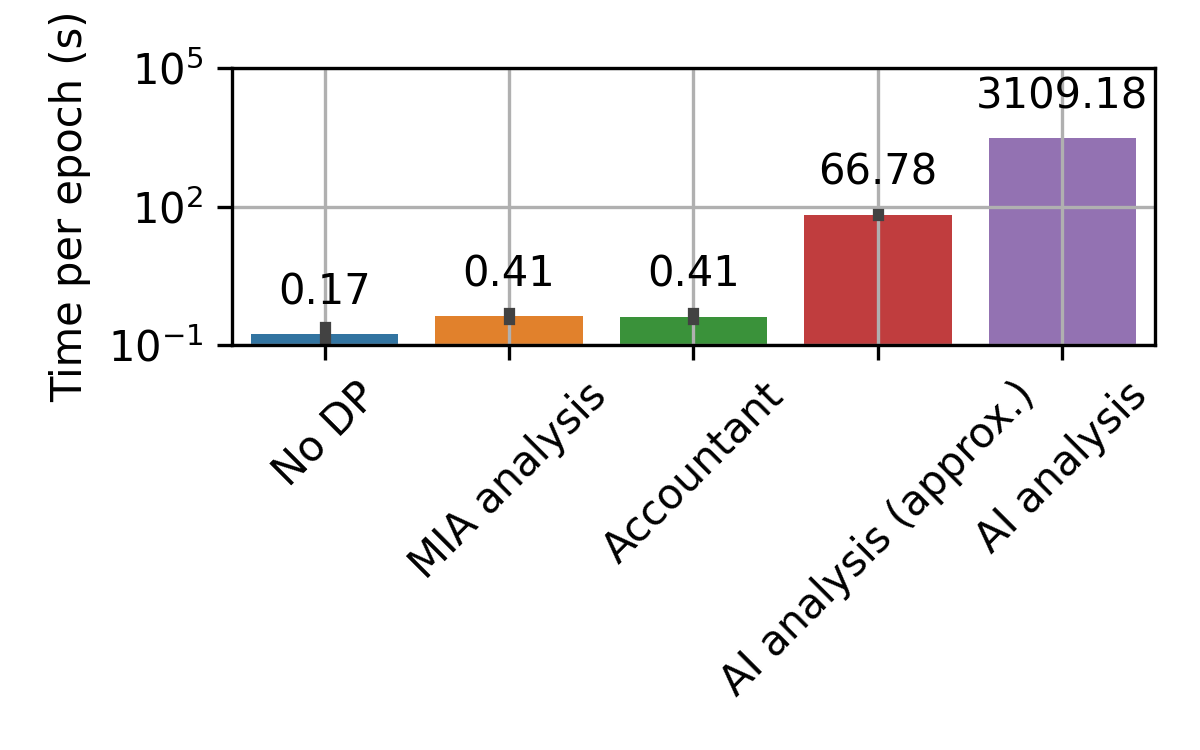}
	\caption{Average running time per epoch, across 10 epochs. \adult dataset (sensitive attribute has 73 possible values).
	}
	\label{fig:dpsgd-time-comparison}
\end{figure}

\paragraph{Full vs Approximate AI Analysis.}

In the previous part, we observed that the approximate AI analysis heavily reduces the computational costs.
\Cref{fig:ai-full-vs-approximate} compares the Bayes security estimates of the two analyses.
These experiments are run on the \adult dataset, whose sensitive attribute has more than 2 values; this enables appreciating the difference between the approximate and full version.
We remark that the approximate analysis can never indicate more security than the true one: it is a lower bound of the full AI analysis by construction.
The results suggest that the price one has to pay when running the approximate AI analysis as opposed to the full one is fairly small.
The estimated bounds on Bayes security after 20 epochs are $\beta^* \approx 0.945$ for the full analysis
and $\beta^* \approx 0.950$ for the approximate analysis.

Overall, while the costs of the approximate AI analysis are much larger than standard DP-SGD training, it is still a relatively scalable way of training, and it has advantages in the privacy-utility tradeoff, as illustrated in the next section.

\begin{figure}
	\centering
	\includegraphics[width=0.75\linewidth]{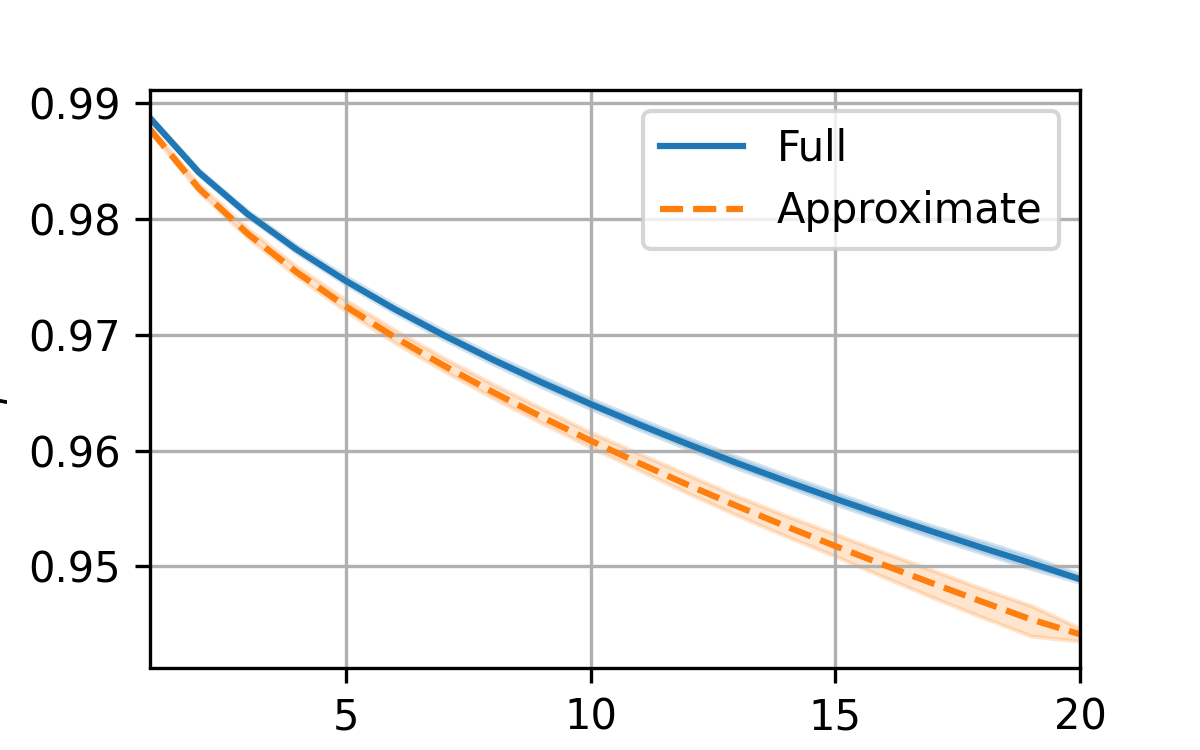}
	\caption{Bayes security on the \adult dataset, estimated via the full and approximate AI analysis.
	}
	\label{fig:ai-full-vs-approximate}
\end{figure}

\subsection{Bayes Security of DP-SGD}
\label{sec:goodness-analyses}

\Cref{fig:adult-purchase} relates the Bayes security (MIA and AI) with the accuracy of the trained model;
to avoid jitters, we round the accuracy to the closest multiple of 10\%, and show the corresponding confidence region.
For reference, we include the $(\varepsilon, \delta)$-DP estimate provided by a numerical accountant.

\paragraph*{Comparison with $(\varepsilon, \delta)$-DP.}
We observe that, for the chosen parameters, the risk of MIA is relatively high even for low values of $\varepsilon$:
for $\varepsilon \approx 0.5$, the risk of MIA is $\beta^* \approx 0.9$ (\adult).
By the relation between Bayes security and the attacker accuracy, and
assuming a uniform prior on the membership, this value of $\beta^*$ implies that the attacker can guess the membership correctly with 55\% probability.
Similarly, for an attacker aiming for a maximum FPR or 10\%, their TPR is at most 20\% (\adult).
This is a relatively high risk, considering that it matches a value of $\varepsilon$ that would generally be considered to be extremely safe.
Of course, this analysis depends on various assumptions: i) that the attacker knows the entire training set minus the challenge point, ii) that they can access all the gradient updates, and iii) that the MIA threat model is a concern for the particular deployment.
We now show empirically how relaxing the last assumption offers important privacy-utility benefits; we discuss the other two assumptions in \Cref{sec:white-box}.

\paragraph*{Empirical comparison between threat models.}
A principled way of relaxing a security analysis is to adopt a weaker threat model;
\eg, there are practical applications where MIA may not be a concern, but AI is.

Consider the best models trained on \adult (83\% accuracy) and \purchase (73\% accuracy).
We compare their resilience against MIA and AI by mapping their Bayes security to i) the attacker's success rate, and ii) TPR at FPR$\leq$10\%.
To compute the former, we need to assume a prior on the membership (resp., values of the sensitive attribute); for simplicity, we assume a uniform prior.
As for TPR$@$FPR, we notice it is not well-defined in general for AI.
We define it as follows: assume that there is a particularly damaging value of the sensitive attribute, which the attacker aims at predicting precisely; for example, in the \purchase dataset, the attacker may be interested in the value `1', corresponding to a product purchase.
We define TPR$@$FPR by considering this value to be the ``positives'' class.
The table below summarizes the results; we denote the attacker's success rate by $V$.

\begin{tabular}{lccc}
	& $\mathbf{\beta^*}$ & $\mathbf{V}$ & \textbf{TPR} \\
	\textbf{Task}& AI (MIA) & AI (MIA) & AI (MIA) \\
	\midrule
	\adult & 0.93 (0.88) & 53\% (56\%) & $\leq$ 16\% (20\%) \\
	\purchase & 0.99 (0.87) & 51\% (57\%) & $\leq$ 11\% (23\%) \\
	\bottomrule
\end{tabular}

We observe that, for both tasks, the AI analysis gives a much tighter bound on the attacker's probability of success, and on the TPR$@$FPR, than the MIA analysis.
In particular, we observe that best performing model on the \purchase dataset, while relatively unsecure against MIA (\eg, TPR$\leq$23\%), is almost perfectly secure against AI (TPR$\leq$11\%).
This emphasizes the benefits of a more nuanced analysis of the privacy risks of DP-SGD, which takes into account the attacker's knowledge and goals.

\section{Towards Inference-time Attackers}
\label{sec:white-box}

We apply our formalization to capture an \textit{inference-time} attacker, and discuss obstacles and future directions.

\paragraph{Training-time attacker.}
Both DP-guided analyses and our techniques have an important underlying assumption: that \emph{the attacker is able to inspect (and, possibly, modify) the intermediate gradients produced during training}.
This \emph{training-time} attacker is widely used throughout the literature, and has led to state-of-the-art results, such as the tight bounds for MIA obtained using the PLD accountant or the approach by~\citet{Humphries}.
However, this assumption is quite strong in practice.
Whilst it might hold in a federated learning setting, where the attacker can inspect (and possibly modify) the gradients during training, it it not representative of settings in which the adversary cannot observe the training process, \eg, a model trained privately in a secure environment.

\paragraph{Inference-time attacker.}
In contrast to the training-time attacker, the \emph{inference-time} attacker cannot observe the training process, and only has access to the final model.
Note that we can still consider the standard white-box vs.\ black-box duality for the inference-time attacker, which refers to whether the attacker has either full access to the model's weights or only the ability to query the model and receive responses.
Intuitively, the inference-time attacker has less information about the model than the training-time attacker, so it is reasonable to assume that the former should be weaker than the latter.
Evaluating security against an inference-time attacker may therefore yield significantly better bounds.

\paragraph{Modelling inference-time attackers with our framework.}
We expect our analysis techniques can be used for studying this weaker (albeit more realistic) threat scenario.
Formally, this is modelled by an attacker who tries to guess secret $S$ given only the final model weights $O_T$; note that the secret $S$ would be left unchanged.
As for deriving the bounds, one may be tempted to use a similar strategy to the one we used in \Cref{sec:main-result}.
Unfortunately, we found this to be non-trivial.

The main difficulty in applying our results to this problem comes from the effect that the challenge point $\challenge$, (possibly) sampled at step $t$, has on the subsequent gradient functions $\bar{\g}_{t+1}(\cdot), \bar{\g}_{t+2}(\cdot), \dots$.
In our analysis (\Cref{sec:just-need-intermediate-gradients}), we could disregard the effect of $\bar{\g}_{t}(z)$ for points $z \neq \challenge$ for $t > 1$.
The reason is that $\bar{\g}_{t}(z)$ did not bear \emph{additional} information about the challenge point to the adversary (who has access to $\bar{\g}_{t-1}(z)$ as well).
This reasoning does not apply to an inference-time attacker: since $\bar{\g}_{t}(\cdot)$ might have seen the effect of $\challenge$ in a previous step, and the attacker does not know $\bar{\g}_{t-1}(\cdot)$, we cannot disregard it as a potential source of leakage.

\paragraph{Explored directions.}
One way to approach this problem is via a taint analysis.
Let $\samplerate$ be the probability that the challenge point $\challenge$ is sampled at step $t$.
Then we can write a combinatorial expression that factors the likelihood that gradients (at points $z \neq \challenge$) were tainted by the presence of $\challenge$.
Unfortunately, this approach does not improve substantially on the analysis that we discussed in this paper:
since DP-SGD is usually run for a number of steps that is proportional to the number of batches, the probability ``$\challenge$ was sampled before step $t$'' grows exponentially with $t$; leading to trivial bounds.

We suspect that to obtain a tighter analysis of the security of DP-SGD against inference-time attackers, one may need to quantify the \emph{influence} of the challenge point on the gradient function.
Influence functions~\cite{hampel1974influence,koh2017understanding} may be a good tool to approximate this; however, it should be noted, this would require making assumptions about the gradient function.
An alternative is to use black-box leakage estimation methods (\eg, \cite{cherubin2019f}), which however require an infinite amount of data to provably convergence~\cite[Theorem 2.7]{cherubin2019black}.

\section{Conclusion}
\label{sec:conclusion}

The privacy of DP-SGD has historically been measured via DP-guided moment accountants.
This practice comes with various issues: i) computational complexity, ii) accountants are hard to implement correctly~\cite{fft2,doroshenko2022connect,gopi2021numerical}, iii) and DP is traditionally applied in a threat-agnostic manner.

Our proposal gives closed-form bounds for the privacy of DP-SGD, which are both easy to implement and orders of magnitude faster to compute than state-of-the-art DP estimators.
Additionally, our bounds are \textit{threat-specific}. This has two main benefits: i) they are arguably more interpretable, as they capture the risk for each threat individually; and ii) in circumstances where a weaker threat model (e.g., AI) is acceptable, one can achieve a much better utility at the same privacy level (\Cref{fig:adult-purchase}).
Finally, our bound on the resilience against the AI threat model is \textit{data-dependent}; this enables further pushing the utility depending on the inherent leakage of the data itself. Given the simplicity of modelling training algorithms (and respective threats) as information theoretic channels, we expect our analysis strategy can be used to derive bounds for other threats for existing training algorithms, or new ones designed with this strategy in mind.

\bibliographystyle{plainnat}
\bibliography{bibliography}
\balance

\appendix

\section{Proofs}

\thmsmallapproximationerror*

\begin{proof}
    First, observe that the total variation distance is related to the KL divergence $D_{KL}$ as follows:
    $\TotalVar(p_S, q_S) \leq \sqrt{\frac{1}{2} D_{KL}(p_S, q_S)}$.

    We use the following bound on the KL divergence between two Gaussian mixtures (see \citet{cover1999elements} and Eq. 13 in \cite{hershey2007approximating}):
    $$D_{KL}(f_{\mathcal{M}}, f_{\mathcal{N}(\samplerate\mu, \sigma^2C^2)}) \leq \sum_{b \in \{0, 1\}^T} \pi_b D_{KL}(f_{\mathcal{N}(\mu b, \sigma^2C^2)}, f_{\mathcal{N}(\samplerate\mu, \sigma^2C^2)}) \,,$$
    where the KL divergence between two $d$-variate Gaussians, respectively centered in $\mu_0$ and $\mu_1$, is:
    \begin{align*}
        D_{KL}(f_{\mathcal{N}(\mu_0, \sigma^2_0)}, f_{\mathcal{N}(\mu_1, \sigma^2_1)}) =& \frac{1}{2}\big(
        (\mu_0-\mu_1)^\intercal\Sigma^{-1}_1 (\mu_0-\mu_1)\\
        &+ \mathrm{tr}(\Sigma_1^{-1}\Sigma_0)
        - \ln \frac{|\Sigma_0|}{|\Sigma_1|} - T
    \big)\,;
    \end{align*}
    Observe that, for $\Sigma_0 = \Sigma_1 = \sigma^2\Id_T$, we have $\mathrm{tr}(\Sigma_1^{-1}\Sigma_0)
    - \ln \frac{|\Sigma_0|}{|\Sigma_1|} - T$ = 0.
    From the above, we obtain:

    \begin{align*}
        D_{KL}(f_{\mathcal{M}}, &f_{\mathcal{N}(\samplerate\mu, \sigma^2C^2)})
            \leq \sum_{b \in \{0, 1\}^T}  \frac{\pi_b}{2\sigma^2C^2}\left(
                \sum_{i=1}^T \mu_i^2(b_i-\samplerate)^2 \right)\\
            &\leq \sum_{b \in \{0, 1\}^T}  \frac{\pi_b}{2\sigma^2C^2}\left(
                \sum_{i=1}^T \max_{j=1}^T \mu_j^2(b_i-\samplerate)^2\right)\\
            &= \sum_{b \in \{0, 1\}^T}  \frac{\pi_b}{2\sigma^2}\left(
                \sum_{i=1}^T b_i^2 + \samplerate^2T - 2\samplerate \sum_{i=1}^T b_i
                \right)\\
            &= \sum_{b \in \{0, 1\}^T}  \frac{\pi_b}{2\sigma^2}\left(
                \samplerate^2T + (1-2\samplerate) \sum_{i=1}^T b_i
                \right)\\
            &= \frac{1}{2\sigma^2} \left(
                \samplerate^2T \sum_{b \in \{0, 1\}^T} \pi_b
                + (1-2\samplerate) \sum_{b \in \{0, 1\}^T} \pi_b |b|
            \right)\\
            &= \frac{1}{2\sigma^2} \left(
                \samplerate T - \samplerate^2 T
            \right)
    \end{align*}
    We used the fact that $b_i^2 = b_i$.
    \end{proof}

    For a fixed $T$, the goodness of this approximation depends on the choices of $\sigma$ and of the sampling rate $\samplerate$.
    In our main result, $T$ is the number of DP-SGD steps.
    The result matches expectations: if one wants to run DP-SGD for longer and retain strong security, one either needs to increase the noise multiplier or reduce the sampling rate.

    In \Cref{fig:tv-error-vs-ratio}, we compare the approximation error between the two distributions for a varying ratio between noise and sampling rate parameters.
    We observe that \Cref{thm:small-approximation-error} holds when the ratio is small.
    In particular, for realistic regimes with $\nicefrac{\samplerate}{\sigma} < 10^{-3}$, we observe a negligible approximation error ($<10^{-4}$).
    In \Cref{sec:mia}, we observe that these values for the parameters are not only practical; they are recommended to achieve stronger levels of security against MIA threats.

\thmmainresult*

\begin{proof}
By \Cref{thm:gradients-reduction}, the relation between Bayes security and the total variation \Cref{thm:relations-beta-tv}:
\begin{align*}
    \beta^*(P_{O \mid S}) &\geq \beta^*(P_{G \mid S})\\
        &= 1-\max_{s_0, s_1 \in \codom(\prop)} \TotalVar(P_{G \mid S=s_0}, P_{G \mid S=s_1})
\end{align*}

We now determine the total variation term.
Observe the following basic consequence of the triangle inequality.
Let $\nu_a$ and $\xi_a$ be two distributions parameterized by $a \in \{0, 1\}$.
Then:
$$\TotalVar(\nu_0, \nu_1) \leq \TotalVar(\xi_0, \xi_1) + \TotalVar(\nu_0, \xi_0) + \TotalVar(\nu_1, \xi_1) \,.$$
We use this to replace the Gaussians mixture $P_{G \mid S}$ with a Gaussian, replacing the two pairwise distances $\TotalVar(\nu_0, \xi_0), \TotalVar(\nu_1, \xi_1)$ with an error term as defined in \Cref{thm:small-approximation-error}.

For any two $s_0, s_1 \in \codom(\prop)$, we have:

\begin{align*}
	\TotalVar(&P_{G \mid S=s_0}, P_{G \mid S=s_1})\\
	=& \TotalVar(\sum_{b \in \{0,1\}^T} c_b P_{G \mid B=b, S=s_0}, \sum_{b \in \{0,1\}^T} c_b P_{G \mid B=b, S=s_1})\\
	\leq& \TotalVar(\mathcal{N}(\samplerate\bar{\g}(\prop^{-1}(s_0)), \sigma^2 C^2),
	\mathcal{N}(\samplerate\bar{\g}(\prop^{-1}(s_1)), \sigma^2 C^2))\\
	&+ O\left(\frac{\sqrt{\samplerate T}}{\sigma}\right)\\
	=& \erf\left(\samplerate\frac{\fnorm{\bar{\g}(\prop^{-1}(s_0))-\bar{\g}(\prop^{-1}(s_1))}}{2\sqrt{2}\sigma C}\right) + O\left(\frac{\sqrt{\samplerate T}}{\sigma}\right)
\end{align*}

In the first step, we used the above consequence of the triangle inequality.
In the second step, we used the fact that $P_{G \mid B=b, S}$ is a Gaussian distribution and applied \Cref{thm:tv-bound}.

\end{proof}

\thmtprfpr*

\begin{proof}
    \citet{chatzikokolakis2020bayes} show that Bayes security is the minimum of the ratio between the probability that the attacker guesses \textit{incorrectly} having access to the mechanism, $Pr[\adv(\pi, \mathcal{M}(S)) \neq S]$, and the probability that the attacker guesses incorrectly without access to the mechanism, $Pr[\adv(\pi) \neq S]$; that is:
    $$\beta^* \leq \frac{Pr[\adv(\pi, \mathcal{M}(S)) \neq S]}{Pr[\adv(\pi) \neq S]} \quad \forall \pi \in (0, 1)\,.$$

    We rewrite the above in terms of $\tpr$ and $\fpr$.
    Say the mechanism is run $k$ times, each time for a secret $S$ sampled according to the prior distribution $\pi$.
    Let $\fp$ and $\tp$ be the count of false positives and false negatives across these $k$ trials.
    Then:
    \begin{align*}
        &Pr[\adv(\pi, \mathcal{M}(S)) \neq S] = \frac{\fp + \fn}{k}\\
        &Pr[\adv(\pi) \neq S] = \frac{\min(\positives, \negatives)}{k}
    \end{align*}

    By combining the above, we have:
    \begin{align*}
        \beta^* &\leq \frac{Pr[\adv(\pi, \mathcal{M}(S)) \neq S]}{Pr[\adv(\pi) \neq S]}\\
            &= \frac{k}{\min(\positives, \negatives)}  \left(\frac{\fp + \fn}{k}\right)\\
            &= \frac{\positives}{\min(\positives, \negatives)} (\fpr + (1-\tpr))
    \end{align*}
    where $\positives = \pi k$ and $\negatives = (1-\pi)k$ are the number of positive and negative samples, respectively.
    The proof is concluded by considering separately the cases $\positives \leq \negatives$ and $\positives > \negatives$.
\end{proof}

We now discuss the proofs for \Cref{thm:mia} and \Cref{thm:ai}.

\corollaryMIA*

\begin{proof}
    Observe that in the MIA threat model (Game~\ref{game:mia-record-generalized}) $\prop(\challenge_s) = s$ is a bijection.
    Then:

    \begin{align*}
        \Delta_\prop
        &= \max_{\challengezero, \challengeone \in \trainset}\fnorm{\bar{\g}(\challengezero) - \bar{\g}(\challengeone)}\\
        &= \max_{\challengezero, \challengeone \in \trainset} \sqrt{\sum_{t=1}^T \|\bar{\g}_t(\challengezero) - \bar{\g}_t(\challengeone)\|^2}\\
        &\leq 2C\sqrt{T} \,.
    \end{align*}

    Applying \Cref{thm:main} concludes the proof.
\end{proof}

\corollaryAI*

\begin{proof}
	Observe that $\prop$ is a bijection: as per Game~\ref{game:ai}, for every $\challenge$, there is exactly one value $s \in \aiattributespace$ such that $\prop(\challenge) = s$.
	We bound $\Delta_\prop$ as defined in \Cref{eq:delta-prop}:

	\begin{align*}
		\Delta_\prop
			&\leq \max_{s_0, s_1 \in \aiattributespace, \challenge\in \trainset}\fnorm{ \bar{\g}(\concat{\varphi(\challenge)}{s_0})-\bar{\g}(\concat{\varphi(\challenge)}{s_1})} \\
            &=  \max_{s_0, s_1 \in \aiattributespace, \challenge\in \trainset} \sqrt{\sum_{t=1}^T\| \bar{\g}_t(\concat{\varphi(\challenge)}{s_0})-\bar{\g}_t(\concat{\varphi(\challenge)}{s_1}) \|^2} \\
			&\leq \sum_{t=1}^T \max_{s_0, s_1 \in \aiattributespace, \challenge\in \trainset} \| \bar{\g}_t(\concat{\varphi(\challenge)}{s_0})-\bar{\g}_t(\concat{\varphi(\challenge)}{s_1}) \|
	\end{align*}
	We then apply \Cref{thm:main}.
	Note that in this corollary's statement, we range $\challenge \in L_t$, where $L_t$ is the batch sampled at time $t$.
	This is allowed by observing that, if $\challenge$ is not included in the batch at step $t$, it cannot influence the model weights (and gradients) at that step.
\end{proof}

\section{Tightness of MIA bound (\Cref{thm:mia})}

We report results for further sample rates in \Cref{fig:tightness-appendix}.
Results for $\samplerate = 0.001$ (\Cref{fig:tightness}) are reported again, for comparison.
We observe that our observations hold for these parameters: the approximation error is small when $\sigma \geq 1$;
in general, a larger sampling rate $\samplerate$ may further increase this error, but not significatively.

We also note that the PLD accountant failed to provide a bound for $\samplerate = 0.0001$ and $\sigma=0.5$. Further, we observe that for the same sample rate, the approximation increases for a larger $\sigma$. We attribute this to numerical errors in the PLD accountant.

\begin{figure*}
    \begin{subfigure}[b]{0.5\linewidth}
        \centering
        \includegraphics[width=\linewidth]{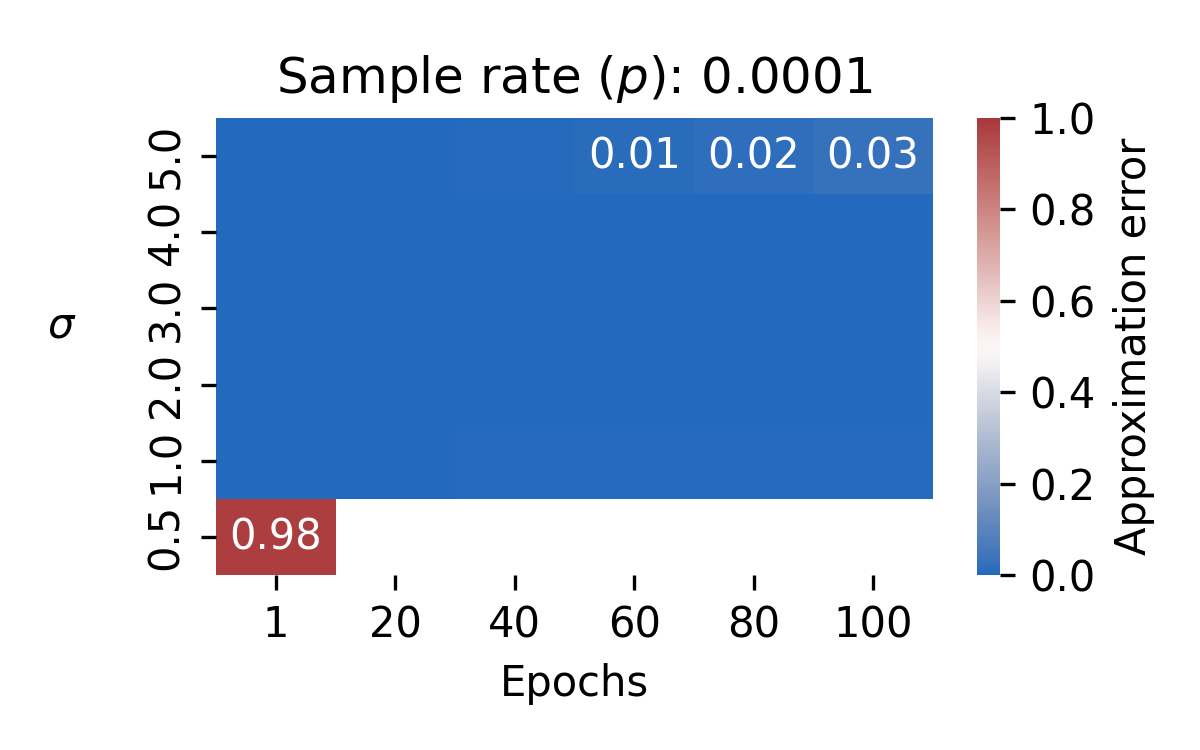}
    \end{subfigure}
    \begin{subfigure}[b]{0.5\linewidth}
        \centering
        \includegraphics[width=\linewidth]{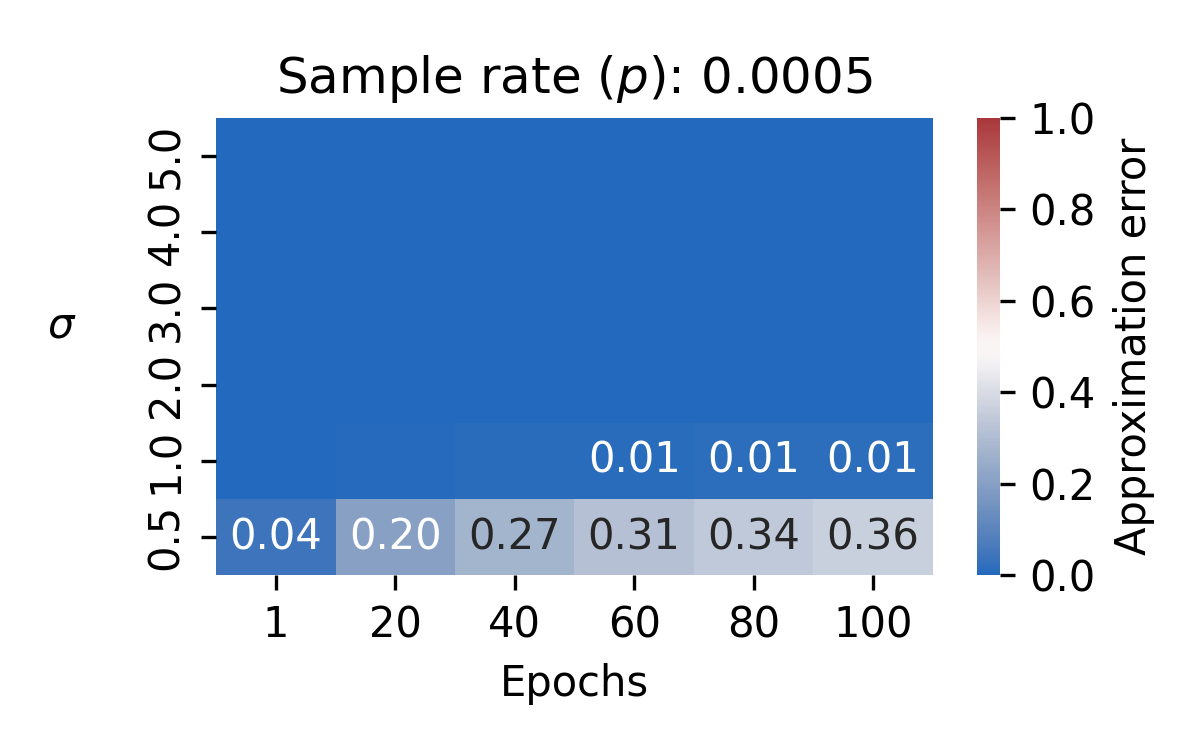}
    \end{subfigure}
    \begin{subfigure}[b]{0.5\linewidth}
        \centering
        \includegraphics[width=\linewidth]{img/accountant-approximation-error-p=0.001.png}
    \end{subfigure}
    \begin{subfigure}[b]{0.5\linewidth}
        \centering
        \includegraphics[width=\linewidth]{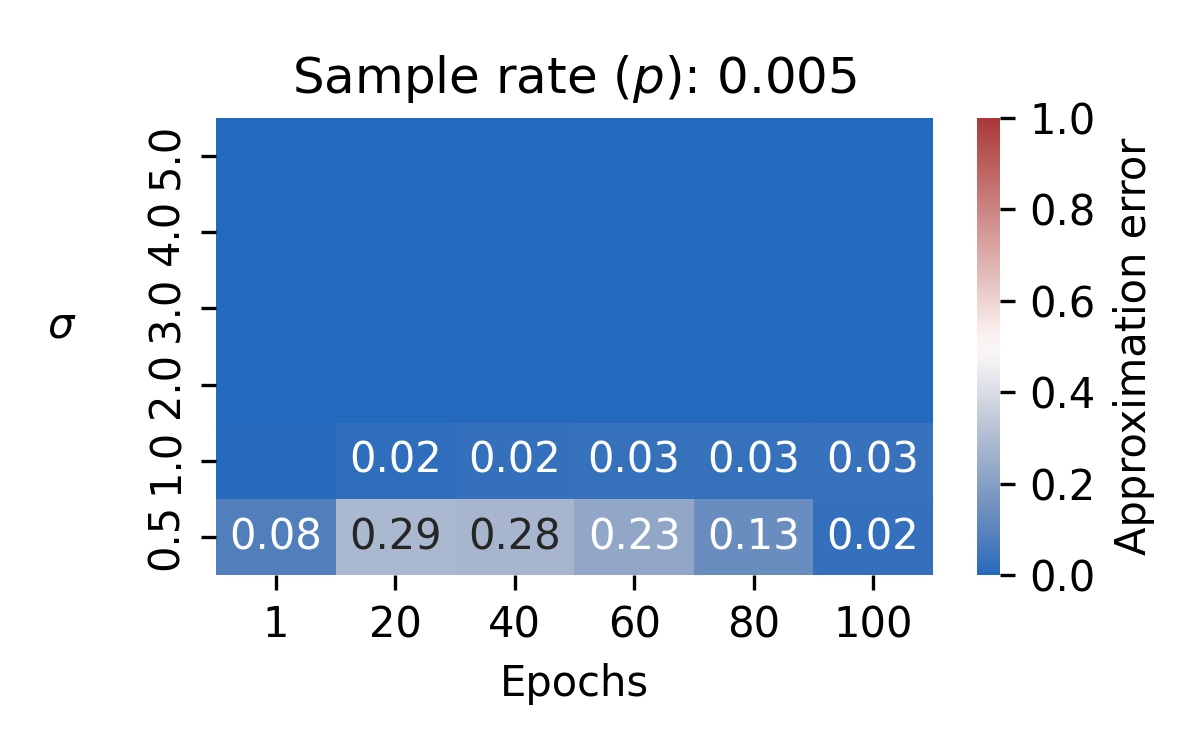}
    \end{subfigure}
    \begin{subfigure}[b]{0.5\linewidth}
        \centering
        \includegraphics[width=\linewidth]{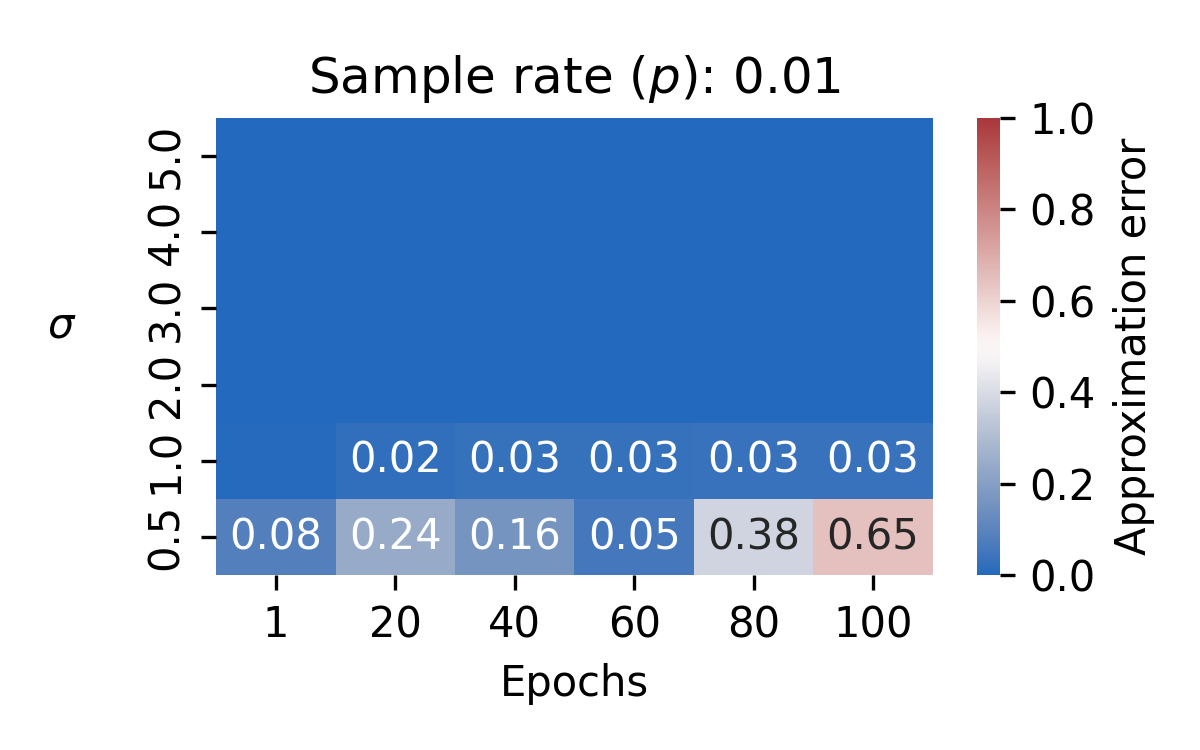}
    \end{subfigure}
    \caption{Approximation error of the PLD accountant for different sample rates $\samplerate$ and noise multipliers $\sigma$.}
    \label{fig:tightness-appendix}
\end{figure*}

\end{document}